\documentclass[aps,superscriptaddress,twocolumn, nofootinbib, longbibliography, pra, 10pt]{revtex4-2}

\newcommand{\tuwien}{Technische Universität Wien, Austria}

\newcommand{\jkulinz}{Johannes Kepler Universität Linz, Austria}

\newcommand{\leicester}{University of Leicester, United Kingdom}

\usepackage{url}
\usepackage{physics}
\usepackage{amsmath}
\usepackage{amsthm}
\usepackage{amsfonts}
\usepackage{svg}
\usepackage{hyperref}
\usepackage{cleveref}
\usepackage{subcaption}
\usepackage{booktabs}
\usepackage{thmtools}

\newtheorem{lemma}{Lemma}
\newtheorem{theorem}{Theorem}

\begin{document}

\title{An Online Approach for Entanglement Verification Using Classical Shadows}

\author{Marwa Marso}
\thanks{Both authors contributed equally to this research. Corresponding authors: maroua.marso@jku.at, sabrina.herbst@tuwien.ac.at}
\affiliation{\jkulinz}

\author{Sabrina Herbst}
\thanks{Both authors contributed equally to this research. Corresponding authors: maroua.marso@jku.at, sabrina.herbst@tuwien.ac.at}
\affiliation{\tuwien}

\author{Jadwiga Wilkens}
\affiliation{\jkulinz}

\author{Vincenzo De Maio}
\affiliation{\tuwien}
\affiliation{\leicester}

\author{Ivona Brandi\'c}
\affiliation{\tuwien}

\author{Richard Kueng}
\affiliation{\jkulinz}

\begin{abstract} 
Quantum measurements are slow, while classical processors are fast, yet existing hybrid protocols never exploit this asymmetry. In this work, we propose an alternative formulation of classical estimators as online algorithms that are updated incrementally upon obtaining a new sample. Classical shadows are the natural fit for this approach: designed around the principle of measuring first and asking questions later, each snapshot is a self-contained classical description that can be processed immediately and independently. As a first demonstration, we focus on mixed state entanglement verification via PT-moments, moments of the partially transposed density matrix that provide experimentally accessible sufficient conditions for entanglement. We construct two unbiased online estimators that together characterize the fundamental challenge between memory footprint and per-shot computational cost: one scales to large systems at low moment order, the other handles high moment orders at the expense of memory exponential in system size. The online estimator certifies entanglement reliably and, by exploiting all $\binom{T}{m}$ combinations of snapshots, requires fewer samples than state-of-the-art baselines, turning entanglement detection from a purely offline diagnostic into a protocol that runs concurrently with the experiment.
\end{abstract}

\maketitle

\section{Introduction}

Every quantum experiment has a hidden classical resource: the time between shots, which has not been considered in existing protocols.
This is particularly critical in the noisy intermediate-scale quantum (NISQ) regime~\cite{Preskill2018quantumcomputingin}, where preparing quantum states and performing measurements is expensive, due to quantum operations being orders of magnitude slower than their classical counterparts~\cite{hoefler2023disentangling} and device overhead adding further cost~\cite{supremacy}. 
Classical processors, by contrast, are faster and can typically be parallelized. 
Between shots, the quantum device must reset, creating an interval of unavoidable slack time during which the classical processor sits idle. 
Thus, in the NISQ regime, the bottleneck is sampling, not classical post-processing.

Most existing protocols, however, treat the quantum and classical stages as two separate phases: first acquire all measurement data, then reconstruct the desired quantities offline. 
This paradigm scales poorly. 
Classical workloads accumulate toward the end of the experiments, where all measurement data needs to be considered at once, which constitutes a significant challenge for the memory subsystem.
More importantly, this approach prevents overlapping classical computation with quantum execution, leaving the classical processor idle during the data collection process.

A more natural strategy would be to exploit the reset-latency window for online classical processing already. Instead of storing all outcomes and processing them only afterward, one can update estimators incrementally as each new measurement arrives. 

This leads to a hybrid quantum-classical \emph{online} paradigm: the quantum device continuously produces measurement outcomes, while the classical computer immediately incorporates them into running estimates. 
In the best case, this yields a final estimate as soon as sampling ends, as illustrated in 
\Cref{fig:protocol-overview}.

We reformulate the traditional post-processing algorithm into an iterative procedure, allowing to calculate shadow estimates on-the-fly.
In this paper, we consider this as an online algorithm~\cite{manasse1988competitive}, in which the emphasis is on making real-time updates rather than on computing approximations over massive, non-storable datasets.  

The randomized measurement toolbox~\cite{elben2023randomized} is particularly well suited for such a framework. 
By performing randomized measurements of a target quantum state and storing the results, one can efficiently predict many state properties classically without reconstructing the full density matrix. 
Here we focus on classical shadow protocols~\cite{huang-predicting, paini2019approximatedescriptionquantumstates, morris2020selectivequantumstatetomography}, which rely on a simple, fixed, single-copy measurement primitive. 
Each snapshot is converted into a succinct classical description that reproduces the underlying state in expectation, allowing one to predict a broad class of observables and nonlinear quantities directly from the snapshots. 
Crucially, each snapshot is processed independently, keeping updates lightweight and memory overhead bounded, which are ideal properties for an online regime. 

\begin{figure*}
    \centering
    \includegraphics[width=1\linewidth]{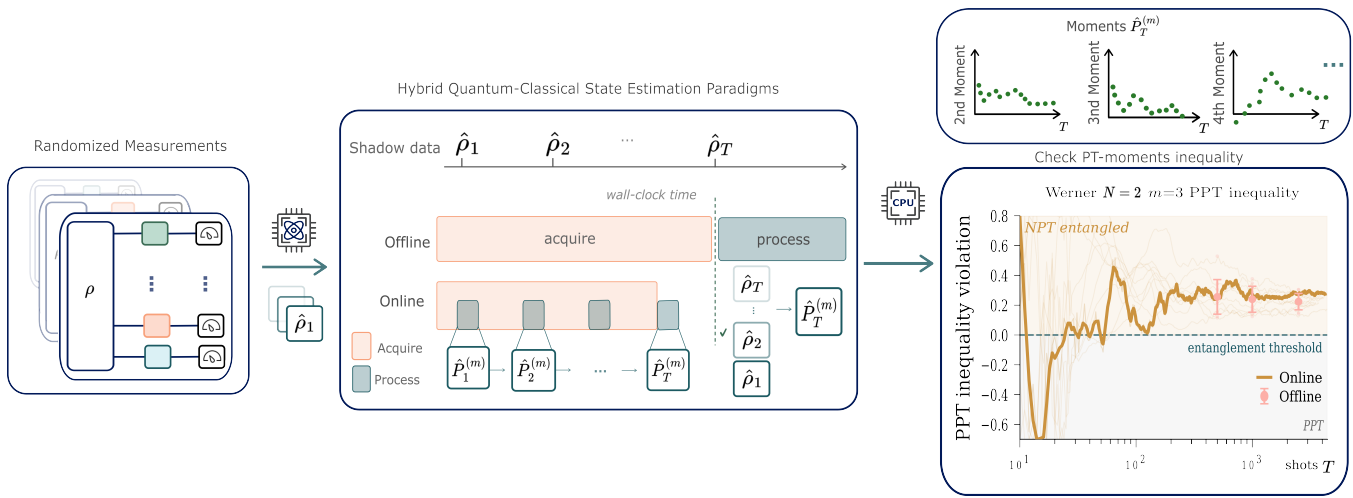}
    \caption{\textbf{From Randomized Measurements to Online Entanglement Detection.} (\textit{Left}) Each shot produces a snapshot $\hat{\rho}_i$ by applying a random Clifford rotation and measuring in the computational basis. (\textit{Middle}) In the offline setting, all $T$ snapshots are acquired before any processing begins. In the online setting, running PT-moment estimates $\hat{P}_t^{(m)}$ are updated incrementally as each new shot arrives.
    (\textit{Right}) The $m{=}3$ PT-moment inequality gap for a $2$-qubit Werner state ($t = 5/6$), tracked by the online estimator over shots. The trace crosses zero, yielding $e_3 < 0$ and certifying NPT entanglement on-the-fly.}
    \label{fig:protocol-overview}
\end{figure*}

Entanglement detection provides a compelling first target, as characterizing the entanglement properties of a quantum state is essential for many quantum information processing task, while being known to be a daunting task on real quantum hardware, where states become mixed due to noise corruption.
The partial transpose (PT) spectrum provides a powerful entanglement condition~\cite{Peres1996, Horodecki1996} that remains valid for (highly) mixed states, however, accessing it experimentally requires full state tomography. 
A practical alternative is to work with PT-moments $\mathrm{Tr}[(\rho^{T_B})^m]$, which carry sufficient information to detect entanglement via hierarchies of inequalities~\cite{moments-ppt, moments-ppt-2, 10.21468/SciPostPhys.12.3.106} and can be estimated from single-copy randomized measurements without reconstructing the full state. 
Since PT-moments are nonlinear functionals of the state, however, their unbiased estimation carries an inherent combinatorial cost that grows with both the moment order and the number of shots, which is a complexity intrinsic to the problem. 
Online processing does not remove this cost, but it allows the classical computation to proceed concurrently with the experiment, turning PT-moment estimation from a purely offline diagnostic into a tool that can be evaluated on-the-fly.

Here, we express PT-moments within the classical shadow framework as unbiased estimators written as polynomial functions of snapshots from single-copy measurements, and show how to implement them as an online algorithm that process each snapshot concurrently with the experiment, thus, exploiting the slack time between shots to perform classical computation. 
Our contribution is both algorithmic and conceptual. 
On the algorithmic side, we propose two online estimators that together characterize the fundamental and unavoidable choice between memory footprint and per-shot computational cost of classical shadow applications. 
The first estimator operates directly on the raw measurement record and scales to large systems at low moment order, while the other maintains a fixed set of accumulated matrices with constant per-shot cost, making it the better fit when high moment orders are required. 
Both estimators are unbiased by construction and achieve better statistical performance than existing approaches in the practically relevant regime~\cite{batched-shadows}. 
On the conceptual side, we argue that the online paradigm is not merely a computational optimization but the appropriate architectural model for hybrid quantum-classical computation: classical computation should be interleaved with measurement acquisition. 
The classical shadow framework, due to its incremental structure, naturally supports this model.

The paper is structured as follows. \Cref{sec:preliminaries} presents preliminaries on classical shadows and entanglement detection via PT-moments. 
In \Cref{sec:pt-moment-strategies}, we review existing strategies for PT-moment estimation before introducing our online approach in \Cref{sec:streaming-algorithm}. 
Experimental results and comparisons to existing approaches appear in \Cref{sec:results}, followed by conclusions 
in \Cref{sec:discussion-conclusion}.

\section{Preliminaries}\label{sec:preliminaries}
\subsection{The Classical Shadows Formalism}

Consider an unknown quantum state \(\rho\) on \(N\) qubits whose properties we want to predict. 
A succinct classical representation can be obtained using the classical shadow framework~\cite{huang-predicting,paini2019approximate,morris2019selective}. 
The framework applies to any ensemble of unitaries $\mathcal{U}$, however, here we focus on the local random Pauli setting, where we rotate the state according to the tensor product of $N$ randomly-drawn unitaries $U_i \in \{\mathbb{I}, H, HS\}$ from a subset of the single-qubit Clifford group and measure in the computational basis. 
This process corresponds to random Pauli-basis measurements on each qubit and is visualized in Figure~\ref{fig:shadows-overview} (left). 

\begin{figure*}
    \centering
    \includegraphics[width=0.75\linewidth]{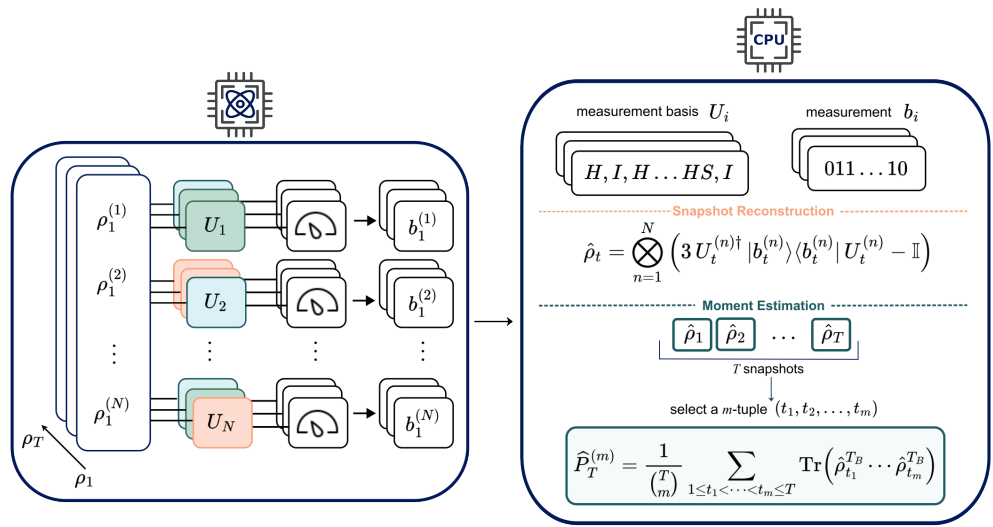}
    \caption{\textbf{Randomized measurements and classical-shadow reconstruction.} For each shot $t \in [T]$, we prepare a copy of the state $\rho$, apply the tensor product of random single-qubit Clifford gates $U_t = \bigotimes_{n=1}^{N} U_n$, measure in the computational basis and return the measurement bit string $b_t \in \{0,1\}^N$. We can then construct the single-shot classical shadow $\hat{\rho}_t$ from the pair $(U_t,b_t)$, which can in turn be used to estimate linear observables as well as higher-order functions, such as the PT-moments, by combining $m$-tuple snapshots into the unbiased estimator of $\mathrm{Tr}(\rho^{T_B})^m$.
    For the purity, a quadratic function, this formula simplifies to $\hat{P}_T^{(2)}= \frac{2}{T(T+1)} \sum_{t_1 \neq t_2} \mathrm{Tr} \left( \rho_{t_1} \rho_{t_2} \right)$.}
    \label{fig:shadows-overview}
\end{figure*}

For each shot, the measurement outcome \(b \in \{0,1\}^N\) is stored together with the chosen unitary \(U = \otimes_{n\in [1\dots N]}U_n\). This outcome occurs with probability $\Pr(b) = \bra{b} U \rho U^\dagger \ket{b}$, and corresponds to the post-measurement state $U^\dagger \ketbra{b}{b} U$. In expectation over the measurement randomness, this defines a quantum channel
\begin{equation}
    \mathcal{M}(\rho) := \mathbb{E}_{U,b}\bigl[ U^\dagger \ketbra{b}{b} U \bigr],
\end{equation}
which, viewed as a linear map, has a unique inverse $\mathcal{M}^{-1}$ that can be computed analytically. 
A single classical snapshot is then given by
\begin{equation}
    \hat{\rho} = \mathcal{M}^{-1}\!\bigl( U^\dagger \ketbra{b}{b} U \bigr).
\end{equation}
By construction, $\mathbb{E}_{U,b}[\hat{\rho}] = \rho$, so each snapshot is an unbiased estimate of $\rho$ and any quantity computed from empirical averages converges to the true value. Note, however, that the individual snapshots are unphysical in the sense that they feature negative eigenvalues. 
In the local random Pauli setting, the snapshot admits a convenient tensor product representation across qubits,

\begin{equation}
    \hat{\rho} = \bigotimes_{n=1}^N \hat{\rho}_{n}, 
    \qquad 
    \hat{\rho}_{n} = 3\, U_n^\dagger \ketbra{b_n}{b_n} U_n - \mathbb{I}.
    \label{single_shot_shadow}
\end{equation}

After performing $T$ shots, we define the classical shadow of $\rho$ as the collection of single snapshots
\begin{equation}
    \begin{split}
        S(\rho, T) = \big\{ & \hat{\rho}_1 = \mathcal M^{-1}\big(U_1^\dagger\ket{b_1}\!\bra{b_1}U_1\big), \dots, \\ &  \hat{\rho}_T  = \mathcal M^{-1}\big(U_T^\dagger\ket{b_T}\!\bra{b_T}U_T\big)\big\}.
    \end{split}
\end{equation}
Averaging over the $T$ snapshots yields the empirical estimator $\hat{o} = \frac{1}{T}\sum_{t=1}^T \mathrm{Tr}(O\hat{\rho}_t)$, which is unbiased since linearity of the trace gives $\mathbb{E}[\hat{o}] = \mathrm{Tr}(O\rho)$. 
More generally, nonlinear functions of $\rho$ can be estimated by forming products of independent snapshots, which opens the door to entanglement detection via quantities such as the PT-moments introduced in the next section. This classical data processing stage is illustrated in Figure~\ref{fig:shadows-overview} (right).

\subsection{Entanglement detection using trace moments of the partially transposed density matrix (PT moments)}

Let $\rho_{AB}$ be a bipartite density matrix and take the partial transpose on subsystem $B$, denoted $\rho^{T_B}$.
For integer $m \ge 1$, the $m$-th partially transposed trace moment, abbreviated as PT-moment, is given by 
\begin{equation}
    P^{(m)} = \mathrm{Tr}\!\left[(\rho^{T_B})^{m}\right]
    \label{eq:state-moments}
\end{equation}

Since $\rho^{T_B}$ is Hermitian with real eigenvalues ${\lambda_i}$, the PT-moments coincide with the power sums $P^{(m)} = \sum_i \lambda_i^m$. 
In particular, $P^{(1)} = \mathrm{Tr}(\rho^{T_B}) = 1$ and $P^{(2)} = \mathrm{Tr}(\rho^2)$ since partial transposition preserves the trace and the purity. 
For higher moments $m \geq 3$, $P^{(m)}$ can differ substantially from $\mathrm{Tr}(\rho^m)$ if $\rho^{T_B}$ has negative eigenvalues, which, by the PPT condition~\cite{Peres1996, Horodecki1996}, certifies entanglement. 
The PT-moments thus carry information about entanglement, which can be extracted via the \emph{elementary symmetric polynomials} (ESPs) of the eigenvalues of $\rho^{T_B}$,
\begin{equation}
e_k := \sum_{1 \le i_1 < \cdots < i_k \le d} \lambda_{i_1} \cdots \lambda_{i_k}, \qquad k = 1, \dots, d,
\label{eq:esp-def}
\end{equation}
which are computable from the PT-moments via the Newton--Girard recurrence~\cite{moments-ppt-2}
\begin{equation}
k\,e_k = \sum_{j=1}^{k}(-1)^{j-1}\, P^{(j)}\, e_{k-j}, \qquad e_0 := 1.
\label{eq:newton-girard}
\end{equation}
If $\rho^{T_B} \succeq 0$, all eigenvalues are nonnegative and $e_k \ge 0$ for all $k$. Conversely, $e_k < 0$ for any $k$ certifies that $\rho^{T_B}$ has at least one negative eigenvalue and hence $\rho$ is NPT entangled. The conditions ${e_k \ge 0}$ thus form a hierarchy that is necessary and sufficient for the PPT condition, and a violation at any single level is a sufficient condition for entanglement. We refer to \Cref{app:ng} for the full derivation.

Classical shadows provide a concrete means to estimate the PT-moments $\{P^{(m)}\}$ via randomized measurements, making this hierarchy experimentally accessible. Such approaches have already been tested on data from experimental hardware, see e.g.\ \cite{moments-ppt,moments-ppt-2,10.21468/SciPostPhys.12.3.106} for state entanglement and Ref.~\cite{batched-shadows} for operator entanglement. We review some of the prevalent approaches, as well as their limitations in the next section. 
\section{PT-moment estimation strategies}\label{sec:pt-moment-strategies}
Our goal is to estimate the PT-moments \(P^{(m)}=\mathrm{Tr}[(\rho^{T_B})^m]\) from classical shadow data, which is known to be very data-intensive in general. 
Indeed, Refs.~\cite{9719827, 10756089} prove that estimating the second moment ($m=2$) of a $n$-qubit state from \emph{any} single-copy measurement protocol requires at least $\Omega \left(2^{n/2} \right)$ shots for certain quantum states. 
Since classical shadows are a single-copy measurement protocol, any PT-moment estimator based on them must incur the same exponential sample complexity. 
When it comes to actual estimation protocols, several offline strategies exist. These range from unbiased but computationally expensive estimators to simpler plug-in methods that introduce structural biases. We review these approaches before presenting our online estimator.

\subsection{Unbiased U-statistics}
\label{sec:pt-moments-combinatorial}

A natural estimator for the $m$-th PT-moment is the symmetric $U$-statistic~\cite{huang-predicting, moments-ppt},
\begin{equation}
    \widehat{P}^{(m)}_T
    := \frac{1}{\binom{T}{m}}
       \sum_{1 \le t_1 < \cdots < t_m \le T}
       \mathrm{Tr}\!\left(
           \hat{\rho}_{t_1}^{T_B}
           \cdots
           \hat{\rho}_{t_m}^{T_B}
       \right),
    \label{eq:pt-moment-shadow-estimator}
\end{equation}
where $\{\hat{\rho}_t\}_{t=1}^T$ are i.i.d.\ classical snapshots of $\rho$. Since the snapshots are independent and each satisfies $\mathbb{E}[\hat{\rho}_t^{T_B}] = \rho^{T_B}$, the expectation of each product factorizes and the estimator is unbiased,
\begin{equation}
 \mathbb{E}\bigl[\widehat{P}^{(m)}_T\bigr] = P^{(m)}.
\end{equation}
However, computing all $\binom{T}{m}$ snapshot products requires repeated passes over the full dataset. Because the snapshot $\hat\rho_t = \bigotimes_{n=1}^N \hat\rho_{t,n}$ factorizes across qubits, each trace product reduces to a product of $N$ single-qubit traces. We exploit this factorization later in our online estimators. 
Thus, evaluating each term has an associated computational complexity of $O(N)$, summing up to a total of $\Theta(T^m N)$. 
As a result, this approach is computationally infeasible for moderate $T$ or $m$. 

\subsection{Plug-in estimator via density matrix reconstruction}
\label{sec:plug-in}

A simpler alternative avoids the combinatorial sum entirely. One first averages all $T$ snapshots into a single reconstructed matrix~\cite{PhysRevLett.111.160406, Guta_2020},
\begin{equation}
    \bar{\rho}_T := \frac{1}{T}\sum_{t=1}^T \hat{\rho}_t,
    \label{eq:empirical-average}
\end{equation}
and then evaluates the PT-moment directly on it, defining the \emph{plug-in estimator}
\begin{equation}
    \bar P_T^{(m)} := \operatorname{Tr}\!\left[(\bar\rho_T^{T_B})^m\right].
\end{equation}
Note that in the context of quantum state tomography, Eq.~\eqref{eq:empirical-average} is known as the linear inversion estimator of the density matrix $\rho$~\cite{PhysRevLett.111.160406,Guta_2020}.

One fundamental drawback of this approach is that the estimation is biased. While $\mathbb{E}[\bar{\rho}_T] = \rho$, it does \emph{not} follow that $\mathbb{E}[\bar{P}_T^{(m)}] = P^{(m)}$, as the expectation commutes with linear operations but not with the nonlinear map $\rho \mapsto \operatorname{Tr}[(\rho^{T_B})^m]$. The resulting bias is of order $O(1/T)$ and vanishes asymptotically, but can be substantial at finite $T$, particularly at higher moment orders $m$.
In addition to the bias, storing $\bar{\rho}_T$ requires $O(4^N)$ memory for the full $2^N\times 2^N$ matrix, and evaluating the moment requires $m{-}1$ matrix multiplications each costing $O(8^N)$. Both scale exponentially with the system size, limiting this approach to small systems.

\begin{table*}[!t]
    \centering
   
    \begin{tabular}{lcccl}
    \hline
         \textbf{Method} & \textbf{Unbiased} & \textbf{Memory} & \textbf{Per-moment cost} & \textbf{Streamable} \\
         \hline
         U-statistic (\Cref{sec:pt-moments-combinatorial}) & yes & $O(TN)$ & $O(\binom{T}{m} N)$ & no \\
         Plug-in (\Cref{sec:plug-in}) & no & $O(4^N)$ & $O(m \cdot 8^N)$ & yes \\
         Batched (\Cref{sec:batched-shadows}) & yes & $O(N_B \cdot 4^N)$ & $O(\binom{N_B}{m} m \cdot 8^N)$ & no \\
         \hline
    \end{tabular}
     \caption{Comparison of PT-moment estimation strategies. $T$: number of shots, $N$: number of qubits, $m$: moment order, $N_B$: number of batches.}
    \label{tab:baselines}
\end{table*}

\subsection{Batched shadows}
\label{sec:batched-shadows}

Batched shadows~\cite{batched-shadows} offer a middle ground between the full $U$-statistic and the plug-in estimator. Instead of averaging all $T$ snapshots into a single reconstructed state, the $T$ shots are partitioned into $N_B$ disjoint batches, and, within each batch, the snapshots are averaged to form a batch estimator
\begin{equation}
    \bar{\rho}_b = \frac{N_B}{T}\sum_{j \in b} \hat{\rho}_j, \qquad b = 1, \dots, N_B.
\end{equation}
The PT-moment is then estimated by a $U$-statistic over the $N_B$ batch estimators rather than the $T$ individual snapshots:
\begin{equation}
    \widehat{P}^{(m)}_{\mathrm{batch}} = \frac{1}{\binom{N_B}{m}} \sum_{1 \le b_1 < \cdots < b_m \le N_B} \operatorname{Tr}\!\left(\bar{\rho}_{b_1}^{T_B} \cdots \bar{\rho}_{b_m}^{T_B}\right).
    \label{eq:batched-estimate}
\end{equation}
In contrast to the plug in estimator, this  procedure is unbiased and reduces the combinatorial cost from $\binom{T}{m}$ to $\binom{N_B}{m}$ terms, since $N_B \ll T$. The number of batches must satisfy $N_B \ge m$.
The price of batching is statistical efficiency: by averaging snapshots within each batch before forming cross-products, cross-batch correlations between individual snapshots are discarded. In particular, of all $\binom{T}{m}$ possible $m$-tuples of distinct snapshots, only those in which all $m$ snapshots fall into different batches contribute to the estimator. This reduces the effective number of terms and increases variance compared to the full $U$-statistic, especially at higher moment orders where the number of discarded cross-block terms grows combinatorially.
More precisely, the approach requires storing $N_B$ density matrices of size $2^N \times 2^N$, resulting in $O(N_B \cdot 4^N)$ memory and $\binom{N_B}{m}$ trace-product evaluations, each costing $O(m \cdot 8^N)$. For small $N_B$, both are manageable for moderate system sizes.

\subsection{Summary and motivation for the online approach}

\Cref{tab:baselines} compares the baseline strategies. The unbiased $U$-statistic is statistically optimal but computationally infeasible; the plug-in estimator is cheap but biased; batched shadows are unbiased and tractable but sacrifice statistical efficiency by discarding cross-block terms. All three approaches that involve state reconstruction require $O(4^N)$ memory per stored matrix.
None of these approaches is simultaneously unbiased, memory-efficient, and streamable.

This gap motivates the online estimators that we develop in \Cref{sec:streaming-algorithm}. The key idea is to restructure the $U$-statistic into an online form that processes each snapshot exactly once and is unbiased by construction. We define two complementary online algorithms: one that minimizes memory footprint for intermediate calculations (\Cref{sec:streaming-no-reconstruction}) and one that minimizes per-shot computation time (\Cref{sec:streaming-reconstructing-dm}). Both exploit the slack time between measurements.

\section{Online PT-moment estimation via hybrid quantum-classical processing}\label{sec:streaming-algorithm}
In a hybrid quantum-classical system, the quantum processor and the classical backend operate on fundamentally different timescales. Current superconducting and trapped-ion devices execute single-shot measurements at rates ranging from tens of shots per second up to a few kilohertz~\cite{gunyho2024single,PhysRevApplied.7.054020}, while a modern CPU core can perform billions of floating-point operations per second. As a consequence, between any two consecutive measurement shots, a classical processor has substantial slack time that can be exploited for computation.
Classical shadows are particularly well suited to this setting: each snapshot is independent and carries no information about the measurement history, allowing classical post-processing to begin immediately as soon as a new measurement arrives. This paves the way to online estimators that incrementally update PT-moment estimates during the experiment, rather than processing the full dataset afterward. 

None of the estimators reviewed in \Cref{sec:pt-moment-strategies} leverages this asymmetry. We now address this gap with two online estimators that navigate the inherent choice between optimizing memory footprint and per-shot computational cost, depicted in \Cref{fig:seq-moment-estimator} and detailed in \Cref{sec:streaming-no-reconstruction,sec:streaming-reconstructing-dm}.

\begin{figure*}
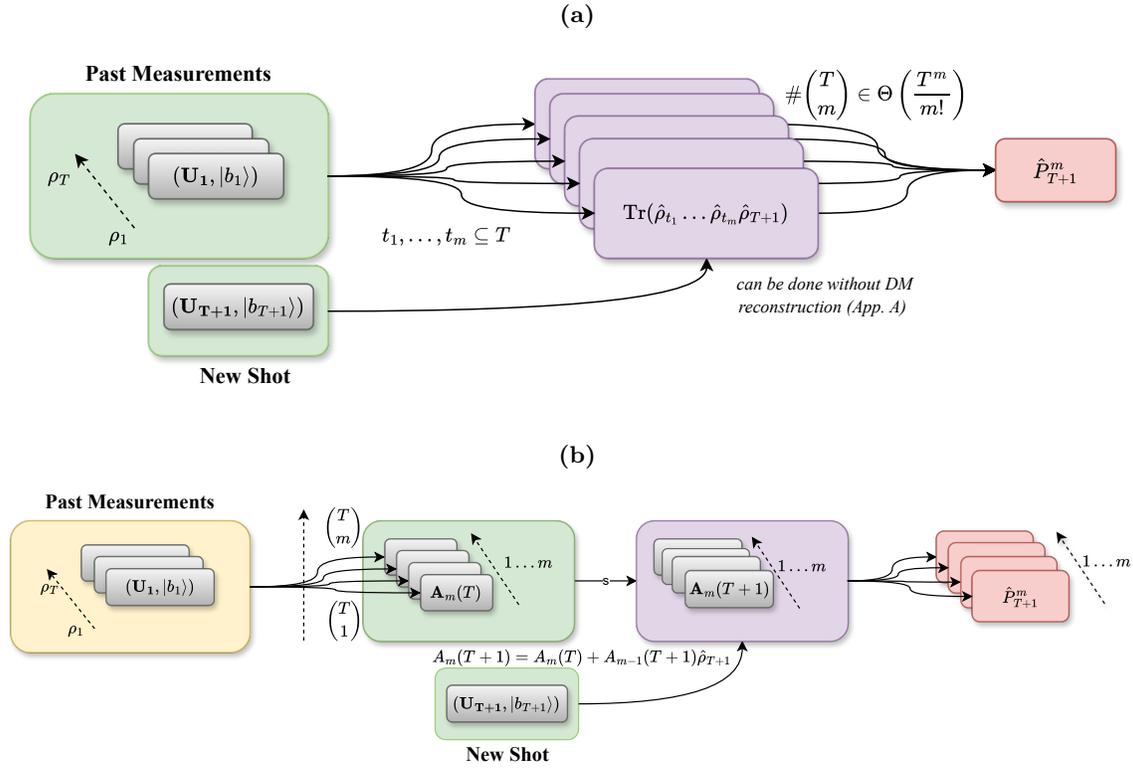

    \centering
    \textbf{(a)}\\[2pt]
    \includegraphics[width=0.9\linewidth]{figures/sequential-no-reconstruction.pdf}\\[10pt]
    \textbf{(b)}\\[2pt]
    \includegraphics[width=0.9\linewidth]{figures/sequential-reconstruction.pdf}
    \caption{\textbf{Online PT-moment estimators.} 
    \textbf{(a)} Without shadow reconstruction: the full measurement record $\{(U_t, b_t)\}_{t=1}^{T}$ is kept in memory (green). Upon arrival of shot $T{+}1$, all $\binom{T}{m-1}$ subsets of past snapshots are combined with the new snapshot to compute $\binom{T}{m-1}$ trace products $\operatorname{Tr}(\hat{\rho}_{t_1}^{T_B} \cdots \hat{\rho}_{T+1}^{T_B})$ (purple), yielding the updated estimate $\hat{P}^{(m)}_{T+1}$ (red). Snapshot matrices need never be explicitly constructed.
    \textbf{(b)} With shadow reconstruction: $m$ accumulated matrices $A_1(T), \dots, A_m(T)$ are kept in memory (green). Upon arrival of shot $T{+}1$, each matrix is updated via $A_k(T{+}1) = A_k(T) + A_{k-1}(T)\,\hat{\rho}_{T+1}^{T_B}$ (purple), and the new snapshot is immediately discarded. The estimate $\hat{P}^{(m)}_{T+1}$ follows from a single trace (red).}
    \label{fig:seq-moment-estimator}
\end{figure*}

\subsection{Online algorithm without constructing empirical density matrices  (memory efficient)}\label{sec:streaming-no-reconstruction}
We now rewrite the unbiased PT-moment estimator from \Cref{eq:pt-moment-shadow-estimator} in an online fashion. 
For $T \ge m$, let $\widehat{P}^{(m)}_T$ denote the estimator of the $m$-th PT-moment after $T$ shots.
When the $(T+1)$-th shot arrives, the estimator is updated as

\begin{equation}
    \begin{split}
        \widehat{P}^{(m)}_{T+1} & =
    \Bigl(1 - \frac{m}{T+1}\Bigr)\, \widehat{P}^{(m)}_{T}
            \;+\;
    \frac{m}{T+1}\;
    \frac{1}{\binom{T}{\,m-1\,}} \\
    & \sum_{1 \le i_1 < \cdots < i_{m-1} \le T}
    \operatorname{Tr}\!\Bigl[
      \hat\rho_{i_1}^{T_B}\cdots\hat\rho_{i_{m-1}}^{T_B}\,
      \hat\rho_{T+1}^{T_B}
    \Bigr].
    \end{split}
\label{eq:streaming-update}
\end{equation}
This update rule follows from decomposing all $m$-tuples drawn from ${1,\dots,T+1}$ into those that exclude the newest shot and those that include it. The contribution of the first subset is already captured by $\widehat{P}^{(m)}T$, therefore only the $(m{-}1)$-tuples from ${1,\dots,T}$ paired with $\hat{\rho}{T+1}$ require fresh computation. As shown in \Cref{app:combinatorial-moment}, each trace product $\operatorname{Tr}[\hat{\rho}_{i_1}^{T_B}\cdots\hat{\rho}_{T+1}^{T_B}]$ can be evaluated directly from the raw measurement outcomes and Pauli axis indices, without ever constructing the $2^N\times 2^N$ snapshot matrices. The $3^m$ single-qubit Pauli trace values can be precomputed once and reused across all updates. As proved in \Cref{app:streaming-update}, the resulting estimator coincides with the $U$-statistic for all $T\ge m$ and is therefore unbiased.

These two observations have direct consequences for memory and runtime. On the memory side, only the raw measurement record needs to be retained: $T$ pairs $(b_{t,n},\sigma_{t,n})$ of bit outcomes and Pauli axis indices, giving $O(TN)$ memory which is exponentially cheaper than the $O(4^N)$ cost of any density matrix based approach. On the computational side, processing shot $T{+}1$ requires evaluating $\binom{T}{m-1}=\Theta(T^{m-1})$ trace products, a cost that grows as a degree-$(m{-}1)$ polynomial in $T$. This is tractable for small $m$ and moderate $T$, but becomes the dominant bottleneck beyond these regimes. Nevertheless, by interleaving classical computation with data acquisition, the online approach can reduce the wall-clock time even when the total work matches the offline $U$-statistic.

\subsection{Online algorithm with construction of empirical density matrices (data locality)}
\label{sec:streaming-reconstructing-dm}

The limited scalability of the approach in \Cref{sec:streaming-no-reconstruction} stems from the combinatorial update cost: processing the $(T+1)$-th snapshot requires iterating over all $\binom{T-1}{m-1}$ combinations of past shots, a $\Theta(T^{m-1})$ operation that quickly becomes the dominant runtime bottleneck. We now present a second online estimator that eliminates this dependence on $T$ entirely, by maintaining a fixed set of $m$ accumulated matrices rather than the full measurement record.
The key observation, derived in \Cref{app:streaming-update}, is that the unbiased PT-moment estimator can be rewritten as
\begin{equation}
    \widehat{P}^{(m)}_T 
    = \frac{1}{\binom{T}{m}}\,\operatorname{Tr}\!\bigl[A_{m-1}(T)\bigr],
    \label{eq:reconstruction-moment}
\end{equation}
where $A_k(T) \in \mathbb{C}^{2^N \times 2^N}$, for $k = 0, \dots, m-1$, 
accumulate ordered products of snapshots,
\begin{align}
    A_k(T) &= \sum_{1 \le i_1 < \cdots < i_k \le T} 
               \hat\rho_{i_1}^{T_B} \cdots \hat\rho_{i_k}^{T_B},
    \quad k = 1, \dots, m-1,
    \label{eq:Ak-definition}
\end{align}
Upon arrival of the $(T{+}1)$-th shot, these matrices satisfy the recursive update rule
\begin{align}
    A_0 &\leftarrow A_0 + \hat{\rho}_{T+1}^{T_B}, \nonumber \\
    A_k &\leftarrow A_k + A_{k-1}^{\,\text{prev}} \cdot \hat{\rho}_{T+1}^{T_B}, 
    \quad k = 1, \dots, m-1,
    \label{eq:reconstruction-update}
\end{align}
where $A_{k-1}^{\,\text{prev}}$ denotes the value of $A_{k-1}$ before the current update. Crucially, the snapshot $\hat\rho_{T+1}^{T_B}$ can be discarded immediately afterward: the full state of the estimator is captured by the $m$ matrices alone.
The per-shot cost is exactly $m$ matrix additions and $m-1$ matrix multiplications — both of which are independent of the shot number $T$. The total memory footprint is $O(m \cdot 4^N)$: exponential in system size but independent of $T$, and growing only linearly in moment order $m$.

\subsection{Memory vs. Data Locality}\label{sec:memory-locality}

The two online estimators from Section~\ref{sec:streaming-no-reconstruction} and Section~\ref{sec:streaming-reconstructing-dm} thus occupy complementary regimes. The reconstruction-free approach is suited to large systems where $O(4^N)$ memory is prohibitive, but is limited to moderate $T$ and $m$ due to its $\Theta(T^{m-1})$ update cost. In contrast, the reconstruction-based estimator is restricted to smaller systems, where storing $O(4^N)$-dimensional matrices is feasible, but scales to arbitrarily large $T$ and supports high moment orders $m$, making it the natural choice when entanglement detection requires higher PT-moments.

It is instructive to reflect on these tradeoffs in the broader context of feature estimation with randomized measurements.
The very formulation of classical shadows assigns equal weight to every measurement shot, which makes the online update in \Cref{eq:streaming-update} inherently data-intensive, as it immediately requires re-evaluation of the whole dataset multiple times for each new update. This works for a moderate shot number $T$ and small moment order $m$, however, quickly becomes prohibitive as they grow. Assuming that we analyze a $N$-qubit system and each measurement is stored in 9 bits (1 bit outcome, 8 bits Pauli rotation indicator), the memory requirements (excluding any intermediate data) will scale as $9 N T$. For example, a 6-qubit instance of 10,000 shots requires 540 KB, which already exceeds the typical L1 cache size. Thus, although the algorithm is easily parallelizable, cache locality is poor: CPU cores will spend most of their time idle waiting for data, rather than performing useful computation. This limits the achievable speed-up from concurrency.

Memory efficiency alone is therefore not sufficient to make the whole post-processing efficient and scalable. A very critical resource is \emph{data locality}, and our two online estimators, visualized in \Cref{fig:seq-moment-estimator}, represent the two opposites on the spectrum between data locality and memory efficiency. The reconstruction-free estimator improves memory usage and scales to larger systems sizes, but suffers from poor locality since each update iterates over an increasingly large portion of the measurements set. The reconstruction-based estimator, instead, stores $O(4^N)$ matrices, sacrificing memory to achieve data locality and constant per-shot update cost. 

Thus, there is no one-size-fits-all solution. In many use-cases, PT-moment estimation up to $m=2$ or $m=3$ suffices, and the reconstruction-free approach from \Cref{sec:streaming-no-reconstruction} remains practical even for moderate system sizes. In contrast, in scenarios where entanglement detection requires high-order PT moments, the reconstruction-based estimator in \Cref{sec:streaming-reconstructing-dm} would be the more appropriate choice, despite its exponential memory cost.

Note that our two estimators are on two opposite ends of the spectrum, which seemingly imposes a trade-off.  Typically, this refers, however, to balancing multiple objectives, such as memory usage and data locality, often expressed, for example, as a Pareto front showing the trade-offs among non-dominated solutions. In contrast, this section is actually addressing a binary choice between memory and data locality (“either–or”), rather than a continuum of possible compromises. It is not clear whether this problem can indeed be phrased in a way allowing to get (parts of) the best of both worlds for a more satisfactory solution balancing the two approaches.

\section{Results}\label{sec:results}

We evaluate both online estimators on a sequence of  Werner states~\cite{werner1989}. These are ideal test candidates, because they exist for any number of qubits, are mixed and feature a single parameter that controls the amount of entanglement present. As shown in Appendix~\ref{app:werner-states}, this parameter can be tuned to require progressively higher moment orders to detect entanglement. This allows us to generate test cases that are increasingly more challenging for our online estimators.

\begin{figure*}[t]
    \centering
    \includegraphics[width=0.9\linewidth]{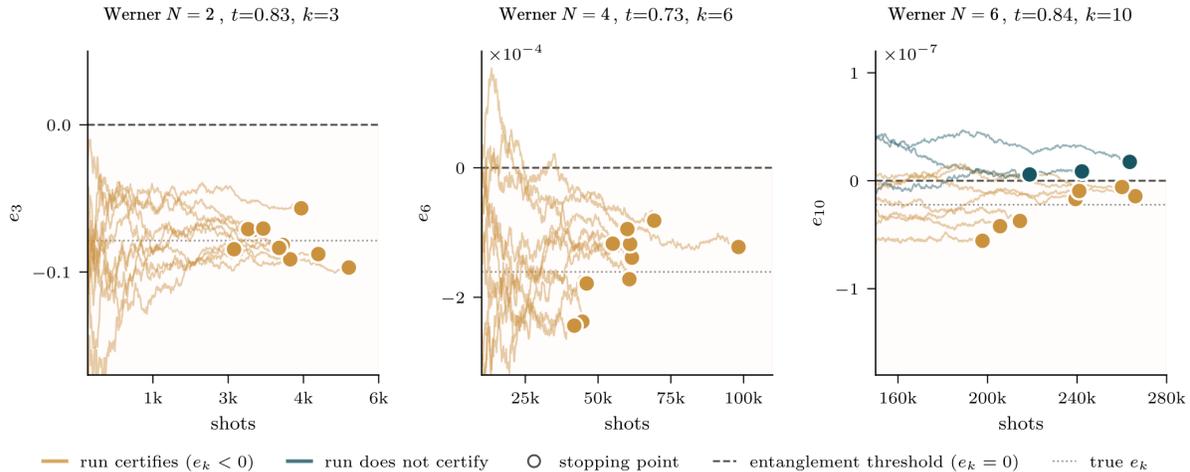}
    \caption{\textbf{Online entanglement detection across system sizes.} Each trace shows the elementary symmetric polynomial $e_k$ of the estimated PT spectrum over shots, for $10$ independent runs. Traces that cross $e_k = 0$ (dashed line) certify entanglement. Dots mark the stopping point of each run. Left: Werner $2$q, $k{=}3$. Middle: Werner $4$q, $k{=}6$. Right: Werner $6$q, $k{=}10$.}
    \label{fig:certification}
\end{figure*}

All experiments use $10$ independent datasets comprised of $10^6$ classical shadows (randomized single-qubit measurements) each, generated in silico from a classical description of the underlying target state. The online estimator with reconstruction (\Cref{sec:streaming-reconstructing-dm}) is implemented in Python~3.12 using sparse matrices from \texttt{scipy}~\cite{2020SciPy-NMeth}; the estimator without reconstruction (\Cref{sec:streaming-no-reconstruction}) uses \texttt{numpy}~\cite{numpy} with \texttt{numba}~\cite{numba} JIT compilation. 
We provide the source code and results in~\cite{GitHub}.
As a baseline, we compare against the batched shadow estimator from Ref.~\cite{batched-shadows}, implemented in the \texttt{RandomMeas.jl} package~\cite{elben2025randommeasjljuliapackagerandomized}. 
All experiments were run on an AMD EPYC 7402 server (48 cores, 1.5\,TB RAM).

The underlying Werner states can be written as $\rho_W(t) = (\mathbb{I} - tF)/(d^2 - dt)$, where $F$ denotes the SWAP operator. We execute benchmark studies for $N = 2, 4, 6, 8$ qubits and choose the mixing parameter $t$ such that the first violated moment order ranges from $k = 3$ to $k = 18$. The full set of benchmark instances is succinctly summarized in \Cref{tab:states-benchmarking} and we refer to \Cref{app:werner-states} for additional details. 

We declare convergence when the relative change between successive iterates stays below a tolerance threshold of $10^{-3}$ for at least $10$ consecutive steps, i.e.\ $|\hat{P}_{T+1}^{(m)} - \hat{P}_{T}^{(m)}| < 10^{-3} \max(|\hat{P}_{T+1}^{(m)}|, |\hat{P}_{T}^{(m)}|)$.
This type of relative-change stopping rule is widely used in iterative and online algorithms, such as fixed-point iterations~\cite[p.~59]{burden1997numerical}, iterative schemes in linear algebra (e.g., Jacobi~\cite[p.~439]{burden1997numerical} and Gauss-Seidel~\cite[p.~441]{burden1997numerical} methods) and Newton's method for root finding~\cite[p.~614]{burden1997numerical}.
It provides a practical handle on the accuracy vs runtime trade-off, even though it does not directly correspond to theoretical or sample complexity bounds. The specific choice of tolerance $10^{-3}$ and window length $10$ is heuristic and can be adjusted.

\subsection{Online entanglement detection}\label{sec:results-certification}

\Cref{fig:certification} shows the $e_k$ inequality traces produced by the reconstruction-based online estimator on Werner states comprised of $N=2$, $N=4$, and $N=6$ qubits. 
For a separable state, the elementary symmetric polynomial $e_k$ of the PT spectrum must be nonnegative; a negative value certifies entanglement. 
Across all three system sizes, the online traces cross the $e_k = 0$ threshold and converge to a stable negative value. 
For $N=2$ qubits ($k = 3$), convergence requires roughly $4\,000$ shots; for $N=4$ qubits ($k = 6$), roughly $60\,000$; at $6$ qubits ($k = 10$), roughly $250\,000$. 
The increase reflects the growing statistical difficulty of estimating higher-order moments of a state whose PT-moment violation shrinks with system size.

Not all runs detect entanglement for $N=6$ qubits: the violation is small enough that some runs converge before the estimate crosses zero. 
This is expected, as the true $e_{10}$ is of order $10^{-7}$, and the estimator variance at this moment order is comparable. 
Nevertheless, the majority of runs detect correctly, and increasing the shot budget would recover all runs.

\subsection{Comparison with batched shadows}\label{sec:results-comparison}

\Cref{fig:streaming-vs-batched} compares the online estimator against the batched shadow estimator of~\cite{batched-shadows} on three Werner state instances spanning different system sizes and moment orders. 
Each online trace shows the $e_k$ estimate as shots are processed; each batched point is a single offline estimate at a fixed shot budget. 
In all three cases the online estimator detects entanglement with substantially fewer shots than the batched baseline.
The $N{=}6$, $k{=}10$ instance from \Cref{fig:certification} is not included here as its true $e_{10}\approx 10^{-7}$ is smaller than the estimator variance of the batched method within our $10^6$-shot budget, leading to the batched estimator failing to detect entanglement at that instance and no meaningful comparison of sample efficiency is possible.

The underlying reason is that the online estimator evaluates the unbiased U-statistic over all $\binom{T}{m}$ combinations of shots, while the batched estimator partitions shots into disjoint blocks and averages within each block before forming the moment estimate. 
This discards cross-block correlations, reducing statistical efficiency.
The gap compounds as the moment order increases, since higher moments involve more cross-block terms that the batched estimator cannot access.

\begin{figure*}[t]
    \centering
    \includegraphics[width=0.9\linewidth]{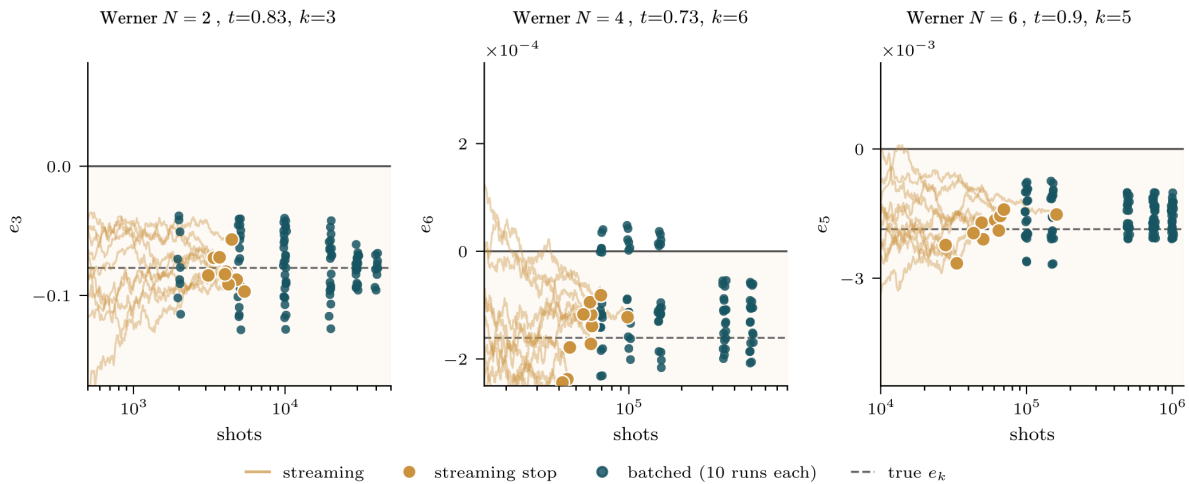}
    \caption{\textbf{Online vs.\ batched entanglement detection.}
    Orange traces show the online $e_k$ estimate over shots ($10$ runs); teal dots show batched estimates at fixed shot budgets ($10$ runs each). Dashed line: true $e_k$. Left: Werner $2$q, $t{=}0.83$, $k{=}3$. Middle: Werner $4$q, $t{=}0.73$, $k{=}6$. Right: Werner $6$q, $t{=}0.9$, $k{=}5$. The online estimator detects entanglement ($e_k < 0$) with fewer shots across all instances.}
    \label{fig:streaming-vs-batched}
\end{figure*}

\subsection{Runtime scaling}\label{sec:results-runtime}

\Cref{tab:runtimes} and \Cref{fig:runtime-scaling} report the mean wall-clock time for both online estimators across system sizes. 
The reconstruction-based estimator (\Cref{sec:streaming-reconstructing-dm}) processes all moment orders in a single pass with constant per-shot cost, but its $O(m \cdot 4^N)$ memory footprint limits it to moderate qubit numbers $N \lesssim 8$; for $N=6$ qubits the run takes roughly $10$ minutes.

The estimator without reconstruction (\Cref{sec:streaming-no-reconstruction}) avoids exponential memory and scales to $10$ qubits for the purity ($m = 2$, $13$~minutes), but its $\Theta(T^{m-1})$ per-shot cost makes $m = 3$ at $10$ qubits already expensive ($\sim\!36$~hours)
The jump from $m = 2$ to $m = 3$ for $N=8$ and $N=10$ qubits, which is visible in \Cref{fig:runtime-scaling}, reflects the polynomial-in-$T$ cost of the combinatorial update, which dominates wall-clock time once the dataset is large enough to exceed the fastest caches.

The two estimators thus occupy complementary regimes, as anticipated by the analysis in \Cref{sec:streaming-reconstructing-dm}: the reconstruction-based approach is the method of choice when high moment orders are needed and the system fits in memory, while the reconstruction-free approach is preferable for large systems at low moment order.

In this work, we refrain from extensive runtime comparisons due to (1) unequal technical setups based on Python vs. Julia, and (2), the fact that a runtime comparison would be only meaningful for real world set-ups, where the data acquisition is done in real-time to account for sampling time as well as device overheads. Note that this will significantly differ based on the platform and exact device, thus, extensive and reliable empirical runtime benchmarks are a fully fledged research question of their own and out of scope for this first proposal.

\begin{table}[]
    \centering
    
    \begin{tabular}{lrrrrr}
\toprule
Method & 2q & 4q & 6q & 8q & 10q \\
\midrule
With reconstruction (any $k$) & 2.3\,s & 1\,min & 10\,min & 23\,h & --- \\
\midrule
Without recon., $m{=}2$ & 2\,s & 0.5\,s & --- & 16\,min & 13\,min \\
Without recon., $m{=}3$ & 1\,s & 12\,s & --- & 3.3\,h & 36\,h \\
\bottomrule
\end{tabular}
\caption{Mean runtime of the two online estimators across system sizes. Dashes indicate configurations not run.}
\label{tab:runtimes}
\end{table}

\begin{figure}[t]
    \centering
    \includegraphics[width=\linewidth]{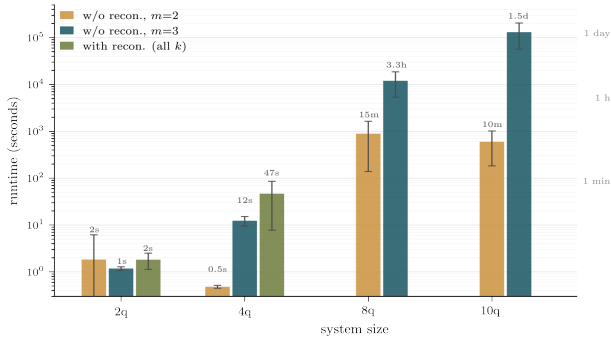}
    \caption{\textbf{Runtime scaling of both online estimators.}
    Bars show mean wall-clock time (log scale) across $10$ runs; error bars indicate one standard deviation. The without-reconstruction estimator (orange, teal) scales to $10$ qubits at $m{=}2$ but becomes expensive at $m{=}3$. The reconstruction-based estimator (olive) handles all moment orders but is limited to smaller systems by its $O(4^N)$ memory footprint. Reference lines mark $1$~minute, $1$~hour, and $1$~day.}
    \label{fig:runtime-scaling}
\end{figure}
\section{Conclusion}\label{sec:discussion-conclusion}
The efficiency of hybrid quantum-classical algorithms is limited by the efficiency of the individual components, and, in current NISQ experiments~\cite{Preskill2018quantumcomputingin}, significant resources are wasted as classical processors sit idle between shots.
This work takes a concrete step toward closing this gap for a concrete and practically relevant task: entanglement detection in highly mixed states.
We have introduced an online framework for classical shadows in which PT-moment estimators~\cite{moments-ppt,moments-ppt-2,10.21468/SciPostPhys.12.3.106} are updated incrementally with each new measurement, turning reset latency into time for useful classical computation.

Two estimators emerge from this framework, each occupying one end of the spectrum between memory footprint and per-shot computational cost: one processes the raw measurement record with memory linear in the number of shots, suitable for large systems at low moment order; the other maintains a fixed set of $m$ accumulated matrices with constant per-shot cost, thus, suitable for high moment orders at small system size.

The empirical results confirm the practical value of the online approach. On Werner states across $2$ to $6$ qubits, the online estimator detects entanglement with substantially fewer shots than the batched shadow baseline, because the U-statistic exploits all $\binom{T}{m}$ combinations of snapshots rather than discarding cross-block correlations. 
The two estimators complement each other as the reconstruction-based variant handles moment orders up to $k = 18$ at $8$ qubits, while the reconstruction-free variant scales to $10$ qubits for low moment orders. 
Both are unbiased by construction and can run concurrently with the experiment.

The current limits are well-characterized. High moment orders at large system sizes remain out of reach: the reconstruction-based estimator is restricted by its $O(m \cdot 4^N)$ memory footprint, which limits it to $N \le 8$ qubits on current hardware; the reconstruction-free estimator is limited by its $\Theta(T^{m-1})$ per-shot cost, which becomes dominant beyond moment order three at ten qubits. These are not artifacts of the implementation but consequences of the combinatorial structure of nonlinear estimation from single-copy measurements. Indeed, rigorous no-go theorems tell us that nonlinear feature estimation from single-copy measurements must require exponential resources (in the number of qubits) in general~\cite{9719827,10756089}.

Two directions are particularly promising. The first is extending the online framework to shadow protocols beyond local Clifford measurements, as well as multi-copy measurement schemes that could bypass the aforementioned information-theoretic no-go theorems and reduce the combinatorial cost intrinsic to nonlinear estimation~\cite{PhysRevLett.126.190505,9719827,10.1126/science.abn7293,noller2025infinite}. 
The second direction is demonstrating the online advantage in a real experimental loop on NISQ hardware. Current superconducting devices operate at measurement rates of roughly $1$-$10$\,kHz~\cite{supremacy}, leaving milliseconds of idle time between shots, more than sufficient for a low-order online update. 
Trapped-ion devices are slower still, making the online opportunity larger. 
A hardware demonstration would validate the central premise of this work under realistic noise and latency conditions.

More broadly, the offline assumption that separates data acquisition from post-processing is not fundamental to hybrid quantum-classical computation. 
Instead, it is merely a design choice, and one that systematically underutilizes classical resources. 
Classical shadows, with their independent single-shot structure, are uniquely positioned to challenge this assumption. 
The online paradigm is not merely a computational optimization but the natural architecture for any protocol in which classical processing can keep pace with quantum data generation.

\acknowledgments
SH is a recipient of the Andreas Dieberger-Peter Skalicky scholarship and acknowledges financial support by the netidee funding campaign (grant 7728). This work has been partially funded through the Themis project (Trustworthy and Sustainable Code Offloading), Austrian Science Fund (FWF): Grant-DOI 10.55776/PAT1668223, the Standalone Project Transprecise Edge Computing (Triton), Austrian Science Fund (FWF): P 36870-N, and by the Flagship Project HPQC (High Performance Integrated Quantum Computing) \# 897481 Austrian Research Promotion Agency (FFG). Research at JKU has in addition been supported by the European Research Council (ERC) via the Starting grant q-shadows (101117138) and from the Austrian Science Fund (FWF) via the SFB BeyondC (10.55776/FG7).

\bibliography{entanglement-verification}

\begin{thebibliography}{32}%
\makeatletter
\providecommand \@ifxundefined [1]{%
 \@ifx{#1\undefined}
}%
\providecommand \@ifnum [1]{%
 \ifnum #1\expandafter \@firstoftwo
 \else \expandafter \@secondoftwo
 \fi
}%
\providecommand \@ifx [1]{%
 \ifx #1\expandafter \@firstoftwo
 \else \expandafter \@secondoftwo
 \fi
}%
\providecommand \natexlab [1]{#1}%
\providecommand \enquote  [1]{``#1''}%
\providecommand \bibnamefont  [1]{#1}%
\providecommand \bibfnamefont [1]{#1}%
\providecommand \citenamefont [1]{#1}%
\providecommand \href@noop [0]{\@secondoftwo}%
\providecommand \href [0]{\begingroup \@sanitize@url \@href}%
\providecommand \@href[1]{\@@startlink{#1}\@@href}%
\providecommand \@@href[1]{\endgroup#1\@@endlink}%
\providecommand \@sanitize@url [0]{\catcode `\\12\catcode `\$12\catcode `\&12\catcode `\#12\catcode `\^12\catcode `\_12\catcode `\%12\relax}%
\providecommand \@@startlink[1]{}%
\providecommand \@@endlink[0]{}%
\providecommand \url  [0]{\begingroup\@sanitize@url \@url }%
\providecommand \@url [1]{\endgroup\@href {#1}{\urlprefix }}%
\providecommand \urlprefix  [0]{URL }%
\providecommand \Eprint [0]{\href }%
\providecommand \doibase [0]{https://doi.org/}%
\providecommand \selectlanguage [0]{\@gobble}%
\providecommand \bibinfo  [0]{\@secondoftwo}%
\providecommand \bibfield  [0]{\@secondoftwo}%
\providecommand \translation [1]{[#1]}%
\providecommand \BibitemOpen [0]{}%
\providecommand \bibitemStop [0]{}%
\providecommand \bibitemNoStop [0]{.\EOS\space}%
\providecommand \EOS [0]{\spacefactor3000\relax}%
\providecommand \BibitemShut  [1]{\csname bibitem#1\endcsname}%
\let\auto@bib@innerbib\@empty
\bibitem [{\citenamefont {Preskill}(2018)}]{Preskill2018quantumcomputingin}%
  \BibitemOpen
  \bibfield  {author} {\bibinfo {author} {\bibfnamefont {J.}~\bibnamefont {Preskill}},\ }\bibfield  {title} {\bibinfo {title} {Quantum {C}omputing in the {NISQ} era and beyond},\ }\href {https://doi.org/10.22331/q-2018-08-06-79} {\bibfield  {journal} {\bibinfo  {journal} {{Quantum}}\ }\textbf {\bibinfo {volume} {2}},\ \bibinfo {pages} {79} (\bibinfo {year} {2018})}\BibitemShut {NoStop}%
\bibitem [{\citenamefont {Hoefler}\ \emph {et~al.}(2023)\citenamefont {Hoefler}, \citenamefont {H{\"a}ner},\ and\ \citenamefont {Troyer}}]{hoefler2023disentangling}%
  \BibitemOpen
  \bibfield  {author} {\bibinfo {author} {\bibfnamefont {T.}~\bibnamefont {Hoefler}}, \bibinfo {author} {\bibfnamefont {T.}~\bibnamefont {H{\"a}ner}},\ and\ \bibinfo {author} {\bibfnamefont {M.}~\bibnamefont {Troyer}},\ }\bibfield  {title} {\bibinfo {title} {Disentangling hype from practicality: On realistically achieving quantum advantage},\ }\href@noop {} {\bibfield  {journal} {\bibinfo  {journal} {Communications of the ACM}\ }\textbf {\bibinfo {volume} {66}},\ \bibinfo {pages} {82} (\bibinfo {year} {2023})}\BibitemShut {NoStop}%
\bibitem [{\citenamefont {Arute}\ \emph {et~al.}(2019)\citenamefont {Arute}, \citenamefont {Arya}, \citenamefont {Babbush} \emph {et~al.}}]{supremacy}%
  \BibitemOpen
  \bibfield  {author} {\bibinfo {author} {\bibfnamefont {F.}~\bibnamefont {Arute}}, \bibinfo {author} {\bibfnamefont {K.}~\bibnamefont {Arya}}, \bibinfo {author} {\bibfnamefont {R.}~\bibnamefont {Babbush}}, \emph {et~al.},\ }\bibfield  {title} {\bibinfo {title} {Quantum supremacy using a programmable superconducting processor},\ }\href {https://doi.org/10.1038/s41586-019-1666-5} {\bibfield  {journal} {\bibinfo  {journal} {Nature}\ }\textbf {\bibinfo {volume} {574}},\ \bibinfo {pages} {505} (\bibinfo {year} {2019})}\BibitemShut {NoStop}%
\bibitem [{\citenamefont {Manasse}\ \emph {et~al.}(1988)\citenamefont {Manasse}, \citenamefont {McGeoch},\ and\ \citenamefont {Sleator}}]{manasse1988competitive}%
  \BibitemOpen
  \bibfield  {author} {\bibinfo {author} {\bibfnamefont {M.}~\bibnamefont {Manasse}}, \bibinfo {author} {\bibfnamefont {L.}~\bibnamefont {McGeoch}},\ and\ \bibinfo {author} {\bibfnamefont {D.}~\bibnamefont {Sleator}},\ }\bibfield  {title} {\bibinfo {title} {Competitive algorithms for on-line problems},\ }in\ \href@noop {} {\emph {\bibinfo {booktitle} {Proceedings of the twentieth annual ACM symposium on Theory of computing}}}\ (\bibinfo {year} {1988})\ pp.\ \bibinfo {pages} {322--333}\BibitemShut {NoStop}%
\bibitem [{\citenamefont {Elben}\ \emph {et~al.}(2023)\citenamefont {Elben}, \citenamefont {Flammia}, \citenamefont {Huang}, \citenamefont {Kueng}, \citenamefont {Preskill}, \citenamefont {Vermersch},\ and\ \citenamefont {Zoller}}]{elben2023randomized}%
  \BibitemOpen
  \bibfield  {author} {\bibinfo {author} {\bibfnamefont {A.}~\bibnamefont {Elben}}, \bibinfo {author} {\bibfnamefont {S.~T.}\ \bibnamefont {Flammia}}, \bibinfo {author} {\bibfnamefont {H.-Y.}\ \bibnamefont {Huang}}, \bibinfo {author} {\bibfnamefont {R.}~\bibnamefont {Kueng}}, \bibinfo {author} {\bibfnamefont {J.}~\bibnamefont {Preskill}}, \bibinfo {author} {\bibfnamefont {B.}~\bibnamefont {Vermersch}},\ and\ \bibinfo {author} {\bibfnamefont {P.}~\bibnamefont {Zoller}},\ }\bibfield  {title} {\bibinfo {title} {The randomized measurement toolbox},\ }\href@noop {} {\bibfield  {journal} {\bibinfo  {journal} {Nature Reviews Physics}\ }\textbf {\bibinfo {volume} {5}},\ \bibinfo {pages} {9} (\bibinfo {year} {2023})}\BibitemShut {NoStop}%
\bibitem [{\citenamefont {Huang}\ \emph {et~al.}(2020)\citenamefont {Huang}, \citenamefont {Kueng},\ and\ \citenamefont {Preskill}}]{huang-predicting}%
  \BibitemOpen
  \bibfield  {author} {\bibinfo {author} {\bibfnamefont {H.}~\bibnamefont {Huang}}, \bibinfo {author} {\bibfnamefont {R.}~\bibnamefont {Kueng}},\ and\ \bibinfo {author} {\bibfnamefont {J.}~\bibnamefont {Preskill}},\ }\bibfield  {title} {\bibinfo {title} {Predicting many properties of a quantum system from very few measurements},\ }\href {https://doi.org/10.1038/s41567-020-0932-7} {\bibfield  {journal} {\bibinfo  {journal} {Nat. Phys.}\ }\textbf {\bibinfo {volume} {16}},\ \bibinfo {pages} {1050} (\bibinfo {year} {2020})}\BibitemShut {NoStop}%
\bibitem [{\citenamefont {Paini}\ and\ \citenamefont {Kalev}(2019{\natexlab{a}})}]{paini2019approximatedescriptionquantumstates}%
  \BibitemOpen
  \bibfield  {author} {\bibinfo {author} {\bibfnamefont {M.}~\bibnamefont {Paini}}\ and\ \bibinfo {author} {\bibfnamefont {A.}~\bibnamefont {Kalev}},\ }\href {https://arxiv.org/abs/1910.10543} {\bibinfo {title} {An approximate description of quantum states}} (\bibinfo {year} {2019}{\natexlab{a}}),\ \Eprint {https://arxiv.org/abs/1910.10543} {arXiv:1910.10543 [quant-ph]} \BibitemShut {NoStop}%
\bibitem [{\citenamefont {Morris}\ and\ \citenamefont {Dakić}(2020)}]{morris2020selectivequantumstatetomography}%
  \BibitemOpen
  \bibfield  {author} {\bibinfo {author} {\bibfnamefont {J.}~\bibnamefont {Morris}}\ and\ \bibinfo {author} {\bibfnamefont {B.}~\bibnamefont {Dakić}},\ }\href {https://arxiv.org/abs/1909.05880} {\bibinfo {title} {Selective quantum state tomography}} (\bibinfo {year} {2020}),\ \Eprint {https://arxiv.org/abs/1909.05880} {arXiv:1909.05880 [quant-ph]} \BibitemShut {NoStop}%
\bibitem [{\citenamefont {Peres}(1996)}]{Peres1996}%
  \BibitemOpen
  \bibfield  {author} {\bibinfo {author} {\bibfnamefont {A.}~\bibnamefont {Peres}},\ }\bibfield  {title} {\bibinfo {title} {Separability criterion for density matrices},\ }\href {https://doi.org/10.1103/physrevlett.77.1413} {\bibfield  {journal} {\bibinfo  {journal} {Physical Review Letters}\ }\textbf {\bibinfo {volume} {77}},\ \bibinfo {pages} {1413–1415} (\bibinfo {year} {1996})}\BibitemShut {NoStop}%
\bibitem [{\citenamefont {Horodecki}\ \emph {et~al.}(1996)\citenamefont {Horodecki}, \citenamefont {Horodecki},\ and\ \citenamefont {Horodecki}}]{Horodecki1996}%
  \BibitemOpen
  \bibfield  {author} {\bibinfo {author} {\bibfnamefont {M.}~\bibnamefont {Horodecki}}, \bibinfo {author} {\bibfnamefont {P.}~\bibnamefont {Horodecki}},\ and\ \bibinfo {author} {\bibfnamefont {R.}~\bibnamefont {Horodecki}},\ }\bibfield  {title} {\bibinfo {title} {Separability of mixed states: necessary and sufficient conditions},\ }\href {https://doi.org/10.1016/s0375-9601(96)00706-2} {\bibfield  {journal} {\bibinfo  {journal} {Physics Letters A}\ }\textbf {\bibinfo {volume} {223}},\ \bibinfo {pages} {1–8} (\bibinfo {year} {1996})}\BibitemShut {NoStop}%
\bibitem [{\citenamefont {Elben}\ \emph {et~al.}(2020)\citenamefont {Elben}, \citenamefont {Kueng}, \citenamefont {Huang}, \citenamefont {van Bijnen}, \citenamefont {Kokail}, \citenamefont {Dalmonte}, \citenamefont {Calabrese}, \citenamefont {Kraus}, \citenamefont {Preskill}, \citenamefont {Zoller},\ and\ \citenamefont {Vermersch}}]{moments-ppt}%
  \BibitemOpen
  \bibfield  {author} {\bibinfo {author} {\bibfnamefont {A.}~\bibnamefont {Elben}}, \bibinfo {author} {\bibfnamefont {R.}~\bibnamefont {Kueng}}, \bibinfo {author} {\bibfnamefont {H.-Y.~R.}\ \bibnamefont {Huang}}, \bibinfo {author} {\bibfnamefont {R.}~\bibnamefont {van Bijnen}}, \bibinfo {author} {\bibfnamefont {C.}~\bibnamefont {Kokail}}, \bibinfo {author} {\bibfnamefont {M.}~\bibnamefont {Dalmonte}}, \bibinfo {author} {\bibfnamefont {P.}~\bibnamefont {Calabrese}}, \bibinfo {author} {\bibfnamefont {B.}~\bibnamefont {Kraus}}, \bibinfo {author} {\bibfnamefont {J.}~\bibnamefont {Preskill}}, \bibinfo {author} {\bibfnamefont {P.}~\bibnamefont {Zoller}},\ and\ \bibinfo {author} {\bibfnamefont {B.}~\bibnamefont {Vermersch}},\ }\bibfield  {title} {\bibinfo {title} {Mixed-state entanglement from local randomized measurements},\ }\href {https://doi.org/10.1103/PhysRevLett.125.200501} {\bibfield  {journal} {\bibinfo  {journal} {Phys. Rev. Lett.}\ }\textbf {\bibinfo {volume} {125}},\ \bibinfo {pages} {200501} (\bibinfo
  {year} {2020})}\BibitemShut {NoStop}%
\bibitem [{\citenamefont {Neven}\ \emph {et~al.}(2021)\citenamefont {Neven}, \citenamefont {Carrasco}, \citenamefont {Vitale},\ and\ \citenamefont {Others}}]{moments-ppt-2}%
  \BibitemOpen
  \bibfield  {author} {\bibinfo {author} {\bibfnamefont {A.}~\bibnamefont {Neven}}, \bibinfo {author} {\bibfnamefont {J.}~\bibnamefont {Carrasco}}, \bibinfo {author} {\bibfnamefont {V.}~\bibnamefont {Vitale}},\ and\ \bibinfo {author} {\bibnamefont {Others}},\ }\bibfield  {title} {\bibinfo {title} {Symmetry-resolved entanglement detection using partial transpose moments},\ }\href {10.1038/s41534-021-00487-y} {\bibfield  {journal} {\bibinfo  {journal} {npj Quantum Inf}\ }\textbf {\bibinfo {volume} {7}},\ \bibinfo {pages} {1} (\bibinfo {year} {2021})}\BibitemShut {NoStop}%
\bibitem [{\citenamefont {Vitale}\ \emph {et~al.}(2022)\citenamefont {Vitale}, \citenamefont {Elben}, \citenamefont {Kueng}, \citenamefont {Neven}, \citenamefont {Carrasco}, \citenamefont {Kraus}, \citenamefont {Zoller}, \citenamefont {Calabrese}, \citenamefont {Vermersch},\ and\ \citenamefont {Dalmonte}}]{10.21468/SciPostPhys.12.3.106}%
  \BibitemOpen
  \bibfield  {author} {\bibinfo {author} {\bibfnamefont {V.}~\bibnamefont {Vitale}}, \bibinfo {author} {\bibfnamefont {A.}~\bibnamefont {Elben}}, \bibinfo {author} {\bibfnamefont {R.}~\bibnamefont {Kueng}}, \bibinfo {author} {\bibfnamefont {A.}~\bibnamefont {Neven}}, \bibinfo {author} {\bibfnamefont {J.}~\bibnamefont {Carrasco}}, \bibinfo {author} {\bibfnamefont {B.}~\bibnamefont {Kraus}}, \bibinfo {author} {\bibfnamefont {P.}~\bibnamefont {Zoller}}, \bibinfo {author} {\bibfnamefont {P.}~\bibnamefont {Calabrese}}, \bibinfo {author} {\bibfnamefont {B.}~\bibnamefont {Vermersch}},\ and\ \bibinfo {author} {\bibfnamefont {M.}~\bibnamefont {Dalmonte}},\ }\bibfield  {title} {\bibinfo {title} {{Symmetry-resolved dynamical purification in synthetic quantum matter}},\ }\href {https://doi.org/10.21468/SciPostPhys.12.3.106} {\bibfield  {journal} {\bibinfo  {journal} {SciPost Phys.}\ }\textbf {\bibinfo {volume} {12}},\ \bibinfo {pages} {106} (\bibinfo {year} {2022})}\BibitemShut {NoStop}%
\bibitem [{\citenamefont {Rath}\ \emph {et~al.}(2023)\citenamefont {Rath}, \citenamefont {Vitale}, \citenamefont {Murciano}, \citenamefont {Votto}, \citenamefont {Dubail}, \citenamefont {Kueng}, \citenamefont {Branciard}, \citenamefont {Calabrese},\ and\ \citenamefont {Vermersch}}]{batched-shadows}%
  \BibitemOpen
  \bibfield  {author} {\bibinfo {author} {\bibfnamefont {A.}~\bibnamefont {Rath}}, \bibinfo {author} {\bibfnamefont {V.}~\bibnamefont {Vitale}}, \bibinfo {author} {\bibfnamefont {S.}~\bibnamefont {Murciano}}, \bibinfo {author} {\bibfnamefont {M.}~\bibnamefont {Votto}}, \bibinfo {author} {\bibfnamefont {J.}~\bibnamefont {Dubail}}, \bibinfo {author} {\bibfnamefont {R.}~\bibnamefont {Kueng}}, \bibinfo {author} {\bibfnamefont {C.}~\bibnamefont {Branciard}}, \bibinfo {author} {\bibfnamefont {P.}~\bibnamefont {Calabrese}},\ and\ \bibinfo {author} {\bibfnamefont {B.}~\bibnamefont {Vermersch}},\ }\bibfield  {title} {\bibinfo {title} {Entanglement barrier and its symmetry resolution: Theory and experimental observation},\ }\href {https://doi.org/10.1103/PRXQuantum.4.010318} {\bibfield  {journal} {\bibinfo  {journal} {PRX Quantum}\ }\textbf {\bibinfo {volume} {4}},\ \bibinfo {pages} {010318} (\bibinfo {year} {2023})}\BibitemShut {NoStop}%
\bibitem [{\citenamefont {Paini}\ and\ \citenamefont {Kalev}(2019{\natexlab{b}})}]{paini2019approximate}%
  \BibitemOpen
  \bibfield  {author} {\bibinfo {author} {\bibfnamefont {M.}~\bibnamefont {Paini}}\ and\ \bibinfo {author} {\bibfnamefont {A.}~\bibnamefont {Kalev}},\ }\bibfield  {title} {\bibinfo {title} {An approximate description of quantum states},\ }\href@noop {} {\bibfield  {journal} {\bibinfo  {journal} {arXiv preprint arXiv:1910.10543}\ } (\bibinfo {year} {2019}{\natexlab{b}})}\BibitemShut {NoStop}%
\bibitem [{\citenamefont {Morris}\ and\ \citenamefont {Daki{\'c}}(2019)}]{morris2019selective}%
  \BibitemOpen
  \bibfield  {author} {\bibinfo {author} {\bibfnamefont {J.}~\bibnamefont {Morris}}\ and\ \bibinfo {author} {\bibfnamefont {B.}~\bibnamefont {Daki{\'c}}},\ }\bibfield  {title} {\bibinfo {title} {Selective quantum state tomography},\ }\href@noop {} {\bibfield  {journal} {\bibinfo  {journal} {arXiv preprint arXiv:1909.05880}\ } (\bibinfo {year} {2019})}\BibitemShut {NoStop}%
\bibitem [{\citenamefont {Chen}\ \emph {et~al.}(2022)\citenamefont {Chen}, \citenamefont {Cotler}, \citenamefont {Huang},\ and\ \citenamefont {Li}}]{9719827}%
  \BibitemOpen
  \bibfield  {author} {\bibinfo {author} {\bibfnamefont {S.}~\bibnamefont {Chen}}, \bibinfo {author} {\bibfnamefont {J.}~\bibnamefont {Cotler}}, \bibinfo {author} {\bibfnamefont {H.-Y.}\ \bibnamefont {Huang}},\ and\ \bibinfo {author} {\bibfnamefont {J.}~\bibnamefont {Li}},\ }\bibfield  {title} {\bibinfo {title} {Exponential separations between learning with and without quantum memory},\ }in\ \href@noop {} {\emph {\bibinfo {booktitle} {2021 IEEE 62nd Annual Symposium on Foundations of Computer Science (FOCS)}}}\ (\bibinfo  {publisher} {IEEE},\ \bibinfo {address} {Denver, CO, USA},\ \bibinfo {year} {2022})\ pp.\ \bibinfo {pages} {574--585}\BibitemShut {NoStop}%
\bibitem [{\citenamefont {Chen}\ \emph {et~al.}(2024)\citenamefont {Chen}, \citenamefont {Gong},\ and\ \citenamefont {Ye}}]{10756089}%
  \BibitemOpen
  \bibfield  {author} {\bibinfo {author} {\bibfnamefont {S.}~\bibnamefont {Chen}}, \bibinfo {author} {\bibfnamefont {W.}~\bibnamefont {Gong}},\ and\ \bibinfo {author} {\bibfnamefont {Q.}~\bibnamefont {Ye}},\ }\bibfield  {title} {\bibinfo {title} {Optimal tradeoffs for estimating pauli observables},\ }in\ \href {https://doi.org/10.1109/FOCS61266.2024.00072} {\emph {\bibinfo {booktitle} {2024 IEEE 65th Annual Symposium on Foundations of Computer Science (FOCS)}}}\ (\bibinfo  {publisher} {IEEE},\ \bibinfo {address} {Chicago, IL, USA},\ \bibinfo {year} {2024})\ pp.\ \bibinfo {pages} {1086--1105}\BibitemShut {NoStop}%
\bibitem [{\citenamefont {Sugiyama}\ \emph {et~al.}(2013)\citenamefont {Sugiyama}, \citenamefont {Turner},\ and\ \citenamefont {Murao}}]{PhysRevLett.111.160406}%
  \BibitemOpen
  \bibfield  {author} {\bibinfo {author} {\bibfnamefont {T.}~\bibnamefont {Sugiyama}}, \bibinfo {author} {\bibfnamefont {P.~S.}\ \bibnamefont {Turner}},\ and\ \bibinfo {author} {\bibfnamefont {M.}~\bibnamefont {Murao}},\ }\bibfield  {title} {\bibinfo {title} {Precision-guaranteed quantum tomography},\ }\href {https://doi.org/10.1103/PhysRevLett.111.160406} {\bibfield  {journal} {\bibinfo  {journal} {Phys. Rev. Lett.}\ }\textbf {\bibinfo {volume} {111}},\ \bibinfo {pages} {160406} (\bibinfo {year} {2013})}\BibitemShut {NoStop}%
\bibitem [{\citenamefont {Guţă}\ \emph {et~al.}(2020)\citenamefont {Guţă}, \citenamefont {Kahn}, \citenamefont {Kueng},\ and\ \citenamefont {Tropp}}]{Guta_2020}%
  \BibitemOpen
  \bibfield  {author} {\bibinfo {author} {\bibfnamefont {M.}~\bibnamefont {Guţă}}, \bibinfo {author} {\bibfnamefont {J.}~\bibnamefont {Kahn}}, \bibinfo {author} {\bibfnamefont {R.}~\bibnamefont {Kueng}},\ and\ \bibinfo {author} {\bibfnamefont {J.~A.}\ \bibnamefont {Tropp}},\ }\bibfield  {title} {\bibinfo {title} {Fast state tomography with optimal error bounds},\ }\href {https://doi.org/10.1088/1751-8121/ab8111} {\bibfield  {journal} {\bibinfo  {journal} {Journal of Physics A: Mathematical and Theoretical}\ }\textbf {\bibinfo {volume} {53}},\ \bibinfo {pages} {204001} (\bibinfo {year} {2020})}\BibitemShut {NoStop}%
\bibitem [{\citenamefont {Gunyh{\'o}}\ \emph {et~al.}(2024)\citenamefont {Gunyh{\'o}}, \citenamefont {Kundu}, \citenamefont {Ma}, \citenamefont {Liu}, \citenamefont {Niemel{\"a}}, \citenamefont {Catto}, \citenamefont {Vadimov}, \citenamefont {Vesterinen}, \citenamefont {Singh}, \citenamefont {Chen} \emph {et~al.}}]{gunyho2024single}%
  \BibitemOpen
  \bibfield  {author} {\bibinfo {author} {\bibfnamefont {A.~M.}\ \bibnamefont {Gunyh{\'o}}}, \bibinfo {author} {\bibfnamefont {S.}~\bibnamefont {Kundu}}, \bibinfo {author} {\bibfnamefont {J.}~\bibnamefont {Ma}}, \bibinfo {author} {\bibfnamefont {W.}~\bibnamefont {Liu}}, \bibinfo {author} {\bibfnamefont {S.}~\bibnamefont {Niemel{\"a}}}, \bibinfo {author} {\bibfnamefont {G.}~\bibnamefont {Catto}}, \bibinfo {author} {\bibfnamefont {V.}~\bibnamefont {Vadimov}}, \bibinfo {author} {\bibfnamefont {V.}~\bibnamefont {Vesterinen}}, \bibinfo {author} {\bibfnamefont {P.}~\bibnamefont {Singh}}, \bibinfo {author} {\bibfnamefont {Q.}~\bibnamefont {Chen}}, \emph {et~al.},\ }\bibfield  {title} {\bibinfo {title} {Single-shot readout of a superconducting qubit using a thermal detector},\ }\href@noop {} {\bibfield  {journal} {\bibinfo  {journal} {Nature Electronics}\ }\textbf {\bibinfo {volume} {7}},\ \bibinfo {pages} {288} (\bibinfo {year} {2024})}\BibitemShut {NoStop}%
\bibitem [{\citenamefont {Walter}\ \emph {et~al.}(2017)\citenamefont {Walter}, \citenamefont {Kurpiers}, \citenamefont {Gasparinetti}, \citenamefont {Magnard}, \citenamefont {Poto\ifmmode~\check{c}\else \v{c}\fi{}nik}, \citenamefont {Salath\'e}, \citenamefont {Pechal}, \citenamefont {Mondal}, \citenamefont {Oppliger}, \citenamefont {Eichler},\ and\ \citenamefont {Wallraff}}]{PhysRevApplied.7.054020}%
  \BibitemOpen
  \bibfield  {author} {\bibinfo {author} {\bibfnamefont {T.}~\bibnamefont {Walter}}, \bibinfo {author} {\bibfnamefont {P.}~\bibnamefont {Kurpiers}}, \bibinfo {author} {\bibfnamefont {S.}~\bibnamefont {Gasparinetti}}, \bibinfo {author} {\bibfnamefont {P.}~\bibnamefont {Magnard}}, \bibinfo {author} {\bibfnamefont {A.}~\bibnamefont {Poto\ifmmode~\check{c}\else \v{c}\fi{}nik}}, \bibinfo {author} {\bibfnamefont {Y.}~\bibnamefont {Salath\'e}}, \bibinfo {author} {\bibfnamefont {M.}~\bibnamefont {Pechal}}, \bibinfo {author} {\bibfnamefont {M.}~\bibnamefont {Mondal}}, \bibinfo {author} {\bibfnamefont {M.}~\bibnamefont {Oppliger}}, \bibinfo {author} {\bibfnamefont {C.}~\bibnamefont {Eichler}},\ and\ \bibinfo {author} {\bibfnamefont {A.}~\bibnamefont {Wallraff}},\ }\bibfield  {title} {\bibinfo {title} {Rapid high-fidelity single-shot dispersive readout of superconducting qubits},\ }\href {https://doi.org/10.1103/PhysRevApplied.7.054020} {\bibfield  {journal} {\bibinfo  {journal} {Phys. Rev. Appl.}\ }\textbf {\bibinfo
  {volume} {7}},\ \bibinfo {pages} {054020} (\bibinfo {year} {2017})}\BibitemShut {NoStop}%
\bibitem [{\citenamefont {Werner}(1989)}]{werner1989}%
  \BibitemOpen
  \bibfield  {author} {\bibinfo {author} {\bibfnamefont {R.~F.}\ \bibnamefont {Werner}},\ }\bibfield  {title} {\bibinfo {title} {Quantum states with einstein-podolsky-rosen correlations admitting a hidden-variable model},\ }\href {https://doi.org/10.1103/PhysRevA.40.4277} {\bibfield  {journal} {\bibinfo  {journal} {Phys. Rev. A}\ }\textbf {\bibinfo {volume} {40}},\ \bibinfo {pages} {4277} (\bibinfo {year} {1989})}\BibitemShut {NoStop}%
\bibitem [{\citenamefont {Virtanen}\ \emph {et~al.}(2020)\citenamefont {Virtanen}, \citenamefont {Gommers}, \citenamefont {Oliphant}, \citenamefont {Haberland}, \citenamefont {Reddy}, \citenamefont {Cournapeau}, \citenamefont {Burovski}, \citenamefont {Peterson}, \citenamefont {Weckesser}, \citenamefont {Bright}, \citenamefont {{van der Walt}}, \citenamefont {Brett}, \citenamefont {Wilson}, \citenamefont {Millman}, \citenamefont {Mayorov}, \citenamefont {Nelson}, \citenamefont {Jones}, \citenamefont {Kern}, \citenamefont {Larson}, \citenamefont {Carey}, \citenamefont {Polat}, \citenamefont {Feng}, \citenamefont {Moore}, \citenamefont {{VanderPlas}}, \citenamefont {Laxalde}, \citenamefont {Perktold}, \citenamefont {Cimrman}, \citenamefont {Henriksen}, \citenamefont {Quintero}, \citenamefont {Harris}, \citenamefont {Archibald}, \citenamefont {Ribeiro}, \citenamefont {Pedregosa}, \citenamefont {{van Mulbregt}},\ and\ \citenamefont {{SciPy 1.0 Contributors}}}]{2020SciPy-NMeth}%
  \BibitemOpen
  \bibfield  {author} {\bibinfo {author} {\bibfnamefont {P.}~\bibnamefont {Virtanen}}, \bibinfo {author} {\bibfnamefont {R.}~\bibnamefont {Gommers}}, \bibinfo {author} {\bibfnamefont {T.~E.}\ \bibnamefont {Oliphant}}, \bibinfo {author} {\bibfnamefont {M.}~\bibnamefont {Haberland}}, \bibinfo {author} {\bibfnamefont {T.}~\bibnamefont {Reddy}}, \bibinfo {author} {\bibfnamefont {D.}~\bibnamefont {Cournapeau}}, \bibinfo {author} {\bibfnamefont {E.}~\bibnamefont {Burovski}}, \bibinfo {author} {\bibfnamefont {P.}~\bibnamefont {Peterson}}, \bibinfo {author} {\bibfnamefont {W.}~\bibnamefont {Weckesser}}, \bibinfo {author} {\bibfnamefont {J.}~\bibnamefont {Bright}}, \bibinfo {author} {\bibfnamefont {S.~J.}\ \bibnamefont {{van der Walt}}}, \bibinfo {author} {\bibfnamefont {M.}~\bibnamefont {Brett}}, \bibinfo {author} {\bibfnamefont {J.}~\bibnamefont {Wilson}}, \bibinfo {author} {\bibfnamefont {K.~J.}\ \bibnamefont {Millman}}, \bibinfo {author} {\bibfnamefont {N.}~\bibnamefont {Mayorov}}, \bibinfo {author} {\bibfnamefont
  {A.~R.~J.}\ \bibnamefont {Nelson}}, \bibinfo {author} {\bibfnamefont {E.}~\bibnamefont {Jones}}, \bibinfo {author} {\bibfnamefont {R.}~\bibnamefont {Kern}}, \bibinfo {author} {\bibfnamefont {E.}~\bibnamefont {Larson}}, \bibinfo {author} {\bibfnamefont {C.~J.}\ \bibnamefont {Carey}}, \bibinfo {author} {\bibfnamefont {{\.I}.}~\bibnamefont {Polat}}, \bibinfo {author} {\bibfnamefont {Y.}~\bibnamefont {Feng}}, \bibinfo {author} {\bibfnamefont {E.~W.}\ \bibnamefont {Moore}}, \bibinfo {author} {\bibfnamefont {J.}~\bibnamefont {{VanderPlas}}}, \bibinfo {author} {\bibfnamefont {D.}~\bibnamefont {Laxalde}}, \bibinfo {author} {\bibfnamefont {J.}~\bibnamefont {Perktold}}, \bibinfo {author} {\bibfnamefont {R.}~\bibnamefont {Cimrman}}, \bibinfo {author} {\bibfnamefont {I.}~\bibnamefont {Henriksen}}, \bibinfo {author} {\bibfnamefont {E.~A.}\ \bibnamefont {Quintero}}, \bibinfo {author} {\bibfnamefont {C.~R.}\ \bibnamefont {Harris}}, \bibinfo {author} {\bibfnamefont {A.~M.}\ \bibnamefont {Archibald}}, \bibinfo {author}
  {\bibfnamefont {A.~H.}\ \bibnamefont {Ribeiro}}, \bibinfo {author} {\bibfnamefont {F.}~\bibnamefont {Pedregosa}}, \bibinfo {author} {\bibfnamefont {P.}~\bibnamefont {{van Mulbregt}}},\ and\ \bibinfo {author} {\bibnamefont {{SciPy 1.0 Contributors}}},\ }\bibfield  {title} {\bibinfo {title} {{{SciPy} 1.0: Fundamental Algorithms for Scientific Computing in Python}},\ }\href {https://doi.org/10.1038/s41592-019-0686-2} {\bibfield  {journal} {\bibinfo  {journal} {Nature Methods}\ }\textbf {\bibinfo {volume} {17}},\ \bibinfo {pages} {261} (\bibinfo {year} {2020})}\BibitemShut {NoStop}%
\bibitem [{\citenamefont {Harris}\ \emph {et~al.}(2020)\citenamefont {Harris}, \citenamefont {Millman}, \citenamefont {van~der Walt}, \citenamefont {Gommers}, \citenamefont {Virtanen}, \citenamefont {Cournapeau}, \citenamefont {Wieser}, \citenamefont {Taylor}, \citenamefont {Berg}, \citenamefont {Smith}, \citenamefont {Kern}, \citenamefont {Picus}, \citenamefont {Hoyer}, \citenamefont {van Kerkwijk}, \citenamefont {Brett}, \citenamefont {Haldane}, \citenamefont {del R{\'{i}}o}, \citenamefont {Wiebe}, \citenamefont {Peterson}, \citenamefont {G{\'{e}}rard-Marchant}, \citenamefont {Sheppard}, \citenamefont {Reddy}, \citenamefont {Weckesser}, \citenamefont {Abbasi}, \citenamefont {Gohlke},\ and\ \citenamefont {Oliphant}}]{numpy}%
  \BibitemOpen
  \bibfield  {author} {\bibinfo {author} {\bibfnamefont {C.~R.}\ \bibnamefont {Harris}}, \bibinfo {author} {\bibfnamefont {K.~J.}\ \bibnamefont {Millman}}, \bibinfo {author} {\bibfnamefont {S.~J.}\ \bibnamefont {van~der Walt}}, \bibinfo {author} {\bibfnamefont {R.}~\bibnamefont {Gommers}}, \bibinfo {author} {\bibfnamefont {P.}~\bibnamefont {Virtanen}}, \bibinfo {author} {\bibfnamefont {D.}~\bibnamefont {Cournapeau}}, \bibinfo {author} {\bibfnamefont {E.}~\bibnamefont {Wieser}}, \bibinfo {author} {\bibfnamefont {J.}~\bibnamefont {Taylor}}, \bibinfo {author} {\bibfnamefont {S.}~\bibnamefont {Berg}}, \bibinfo {author} {\bibfnamefont {N.~J.}\ \bibnamefont {Smith}}, \bibinfo {author} {\bibfnamefont {R.}~\bibnamefont {Kern}}, \bibinfo {author} {\bibfnamefont {M.}~\bibnamefont {Picus}}, \bibinfo {author} {\bibfnamefont {S.}~\bibnamefont {Hoyer}}, \bibinfo {author} {\bibfnamefont {M.~H.}\ \bibnamefont {van Kerkwijk}}, \bibinfo {author} {\bibfnamefont {M.}~\bibnamefont {Brett}}, \bibinfo {author} {\bibfnamefont
  {A.}~\bibnamefont {Haldane}}, \bibinfo {author} {\bibfnamefont {J.~F.}\ \bibnamefont {del R{\'{i}}o}}, \bibinfo {author} {\bibfnamefont {M.}~\bibnamefont {Wiebe}}, \bibinfo {author} {\bibfnamefont {P.}~\bibnamefont {Peterson}}, \bibinfo {author} {\bibfnamefont {P.}~\bibnamefont {G{\'{e}}rard-Marchant}}, \bibinfo {author} {\bibfnamefont {K.}~\bibnamefont {Sheppard}}, \bibinfo {author} {\bibfnamefont {T.}~\bibnamefont {Reddy}}, \bibinfo {author} {\bibfnamefont {W.}~\bibnamefont {Weckesser}}, \bibinfo {author} {\bibfnamefont {H.}~\bibnamefont {Abbasi}}, \bibinfo {author} {\bibfnamefont {C.}~\bibnamefont {Gohlke}},\ and\ \bibinfo {author} {\bibfnamefont {T.~E.}\ \bibnamefont {Oliphant}},\ }\bibfield  {title} {\bibinfo {title} {Array programming with {NumPy}},\ }\href {https://doi.org/10.1038/s41586-020-2649-2} {\bibfield  {journal} {\bibinfo  {journal} {Nature}\ }\textbf {\bibinfo {volume} {585}},\ \bibinfo {pages} {357} (\bibinfo {year} {2020})}\BibitemShut {NoStop}%
\bibitem [{\citenamefont {Lam}\ \emph {et~al.}(2015)\citenamefont {Lam}, \citenamefont {Pitrou},\ and\ \citenamefont {Seibert}}]{numba}%
  \BibitemOpen
  \bibfield  {author} {\bibinfo {author} {\bibfnamefont {S.~K.}\ \bibnamefont {Lam}}, \bibinfo {author} {\bibfnamefont {A.}~\bibnamefont {Pitrou}},\ and\ \bibinfo {author} {\bibfnamefont {S.}~\bibnamefont {Seibert}},\ }\bibfield  {title} {\bibinfo {title} {Numba: a llvm-based python jit compiler},\ }in\ \href {https://doi.org/10.1145/2833157.2833162} {\emph {\bibinfo {booktitle} {Proceedings of the Second Workshop on the LLVM Compiler Infrastructure in HPC}}},\ \bibinfo {series and number} {LLVM '15}\ (\bibinfo  {publisher} {Association for Computing Machinery},\ \bibinfo {address} {New York, NY, USA},\ \bibinfo {year} {2015})\BibitemShut {NoStop}%
\bibitem [{\citenamefont {Marso}\ \emph {et~al.}()\citenamefont {Marso}, \citenamefont {Herbst}, \citenamefont {Wilkens}, \citenamefont {De~Maio}, \citenamefont {Brandic},\ and\ \citenamefont {Küng}}]{GitHub}%
  \BibitemOpen
  \bibfield  {author} {\bibinfo {author} {\bibfnamefont {M.}~\bibnamefont {Marso}}, \bibinfo {author} {\bibfnamefont {S.}~\bibnamefont {Herbst}}, \bibinfo {author} {\bibfnamefont {J.}~\bibnamefont {Wilkens}}, \bibinfo {author} {\bibfnamefont {V.}~\bibnamefont {De~Maio}}, \bibinfo {author} {\bibfnamefont {I.}~\bibnamefont {Brandic}},\ and\ \bibinfo {author} {\bibfnamefont {R.}~\bibnamefont {Küng}},\ }\href {https://github.com/sabrinaherbst/online-entanglement-verification} {\bibinfo {title} {Source code for "an online approach for entanglement verification using classical shadows"}}\BibitemShut {NoStop}%
\bibitem [{\citenamefont {Elben}\ and\ \citenamefont {Vermersch}(2025)}]{elben2025randommeasjljuliapackagerandomized}%
  \BibitemOpen
  \bibfield  {author} {\bibinfo {author} {\bibfnamefont {A.}~\bibnamefont {Elben}}\ and\ \bibinfo {author} {\bibfnamefont {B.}~\bibnamefont {Vermersch}},\ }\href {https://arxiv.org/abs/2509.12749} {\bibinfo {title} {Randommeas.jl: A julia package for randomized measurements in quantum devices}} (\bibinfo {year} {2025}),\ \Eprint {https://arxiv.org/abs/2509.12749} {arXiv:2509.12749 [quant-ph]} \BibitemShut {NoStop}%
\bibitem [{\citenamefont {Burden}\ and\ \citenamefont {Faires}(1997)}]{burden1997numerical}%
  \BibitemOpen
  \bibfield  {author} {\bibinfo {author} {\bibfnamefont {R.~L.}\ \bibnamefont {Burden}}\ and\ \bibinfo {author} {\bibfnamefont {J.~D.}\ \bibnamefont {Faires}},\ }\href@noop {} {\bibinfo {title} {Numerical analysis, brooks}} (\bibinfo {year} {1997})\BibitemShut {NoStop}%
\bibitem [{\citenamefont {Huang}\ \emph {et~al.}(2021)\citenamefont {Huang}, \citenamefont {Kueng},\ and\ \citenamefont {Preskill}}]{PhysRevLett.126.190505}%
  \BibitemOpen
  \bibfield  {author} {\bibinfo {author} {\bibfnamefont {H.-Y.}\ \bibnamefont {Huang}}, \bibinfo {author} {\bibfnamefont {R.}~\bibnamefont {Kueng}},\ and\ \bibinfo {author} {\bibfnamefont {J.}~\bibnamefont {Preskill}},\ }\bibfield  {title} {\bibinfo {title} {Information-theoretic bounds on quantum advantage in machine learning},\ }\href {https://doi.org/10.1103/PhysRevLett.126.190505} {\bibfield  {journal} {\bibinfo  {journal} {Phys. Rev. Lett.}\ }\textbf {\bibinfo {volume} {126}},\ \bibinfo {pages} {190505} (\bibinfo {year} {2021})}\BibitemShut {NoStop}%
\bibitem [{\citenamefont {Huang}\ \emph {et~al.}(2022)\citenamefont {Huang}, \citenamefont {Broughton}, \citenamefont {Cotler}, \citenamefont {Chen}, \citenamefont {Li}, \citenamefont {Mohseni}, \citenamefont {Neven}, \citenamefont {Babbush}, \citenamefont {Kueng}, \citenamefont {Preskill},\ and\ \citenamefont {McClean}}]{10.1126/science.abn7293}%
  \BibitemOpen
  \bibfield  {author} {\bibinfo {author} {\bibfnamefont {H.-Y.}\ \bibnamefont {Huang}}, \bibinfo {author} {\bibfnamefont {M.}~\bibnamefont {Broughton}}, \bibinfo {author} {\bibfnamefont {J.}~\bibnamefont {Cotler}}, \bibinfo {author} {\bibfnamefont {S.}~\bibnamefont {Chen}}, \bibinfo {author} {\bibfnamefont {J.}~\bibnamefont {Li}}, \bibinfo {author} {\bibfnamefont {M.}~\bibnamefont {Mohseni}}, \bibinfo {author} {\bibfnamefont {H.}~\bibnamefont {Neven}}, \bibinfo {author} {\bibfnamefont {R.}~\bibnamefont {Babbush}}, \bibinfo {author} {\bibfnamefont {R.}~\bibnamefont {Kueng}}, \bibinfo {author} {\bibfnamefont {J.}~\bibnamefont {Preskill}},\ and\ \bibinfo {author} {\bibfnamefont {J.~R.}\ \bibnamefont {McClean}},\ }\bibfield  {title} {\bibinfo {title} {Quantum advantage in learning from experiments},\ }\href {https://doi.org/10.1126/science.abn7293} {\bibfield  {journal} {\bibinfo  {journal} {Science}\ }\textbf {\bibinfo {volume} {376}},\ \bibinfo {pages} {1182} (\bibinfo {year} {2022})},\ \Eprint
  {https://arxiv.org/abs/https://www.science.org/doi/pdf/10.1126/science.abn7293} {https://www.science.org/doi/pdf/10.1126/science.abn7293} \BibitemShut {NoStop}%
\bibitem [{\citenamefont {N{\"o}ller}\ \emph {et~al.}(2025)\citenamefont {N{\"o}ller}, \citenamefont {Tran}, \citenamefont {Gachechiladze},\ and\ \citenamefont {Kueng}}]{noller2025infinite}%
  \BibitemOpen
  \bibfield  {author} {\bibinfo {author} {\bibfnamefont {J.}~\bibnamefont {N{\"o}ller}}, \bibinfo {author} {\bibfnamefont {V.~T.}\ \bibnamefont {Tran}}, \bibinfo {author} {\bibfnamefont {M.}~\bibnamefont {Gachechiladze}},\ and\ \bibinfo {author} {\bibfnamefont {R.}~\bibnamefont {Kueng}},\ }\bibfield  {title} {\bibinfo {title} {An infinite hierarchy of multi-copy quantum learning tasks},\ }\href@noop {} {\bibfield  {journal} {\bibinfo  {journal} {arXiv preprint arXiv:2510.08070}\ } (\bibinfo {year} {2025})}\BibitemShut {NoStop}%
\end{thebibliography}%

\onecolumngrid
\appendix

\section{Combinatorial closed form for the m-th moment from Pauli classical shadows}
\label{app:combinatorial-moment}

We derive an explicit closed-form expression for the estimator of the $m$-th moment
\[
\Tr(\rho^m)
\]
constructed from Pauli classical shadows. The derivation makes the underlying combinatorial structure fully explicit and does not rely on kernel formulations or auxiliary abstractions.

\subsection*{Classical shadow snapshots}

Each measurement shot $t \in \{1,\dots,T\}$ produces a classical snapshot
\begin{equation}
\hat{\rho}_t
=
\bigotimes_{n=1}^N \hat{\rho}_{t,n},
\qquad
\hat{\rho}_{t,n}
=
3\,U_{t,n}^\dagger \ketbra{b_{t,n}}{b_{t,n}} U_{t,n}
-
\mathbb{I},
\label{eq:single-shot-shadow}
\end{equation}
where $U_{t,n} \in \{\mathbb{I},H,S^\dagger H\}$ rotates the computational basis into a Pauli eigenbasis and $b_{t,n}\in\{0,1\}$ is the measurement outcome.

Since $U_{t,n}$ maps computational basis states to eigenstates of a Pauli operator, the rotated projector admits the form
\begin{equation}
U_{t,n}^\dagger \ketbra{b_{t,n}}{b_{t,n}} U_{t,n}
=
\frac{1}{2}\bigl(\mathbb{I} + s_{t,n} P_{t,n}\bigr),
\qquad
s_{t,n}=(-1)^{b_{t,n}},
\end{equation}
where $P_{t,n}\in\{X,Y,Z\}$ denotes the measured Pauli axis. Substituting yields the explicit affine form
\begin{equation}
\hat{\rho}_{t,n}
=
\frac{1}{2}\mathbb{I}
+
\frac{3}{2}\, s_{t,n} P_{t,n}.
\label{eq:affine-single-qubit}
\end{equation}

\subsection*{Moment estimator and factorization}

For any fixed $m$-tuple of \emph{distinct} shots $t_1 < \dots < t_m$, the moment estimator involves
\[
\Tr\!\left(\hat{\rho}_{t_1}\cdots \hat{\rho}_{t_m}\right).
\]
Using the tensor-product structure of the snapshots,
\begin{equation}
\hat{\rho}_{t_1}\cdots \hat{\rho}_{t_m}
=
\bigotimes_{n=1}^N
\left(
\hat{\rho}_{t_1,n}\cdots \hat{\rho}_{t_m,n}
\right),
\end{equation}
and hence the trace factorizes as
\begin{equation}
\Tr\!\left(\hat{\rho}_{t_1}\cdots \hat{\rho}_{t_m}\right)
=
\prod_{n=1}^N
\Tr\!\left(
\hat{\rho}_{t_1,n}\cdots \hat{\rho}_{t_m,n}
\right),
\label{eq:trace-factorization}
\end{equation}
where $\tr(\cdot)$ denotes the single-qubit trace. The problem therefore reduces to a single-qubit calculation.

\subsection*{Subset expansion at the single-qubit level}

Fix a qubit $n$. Using Eq.\eqref{eq:affine-single-qubit}, we expand
\begin{equation}
\prod_{j=1}^m
\hat{\rho}_{t_j,n}
=
\prod_{j=1}^m
\left(
\frac{1}{2}\mathbb{I}
+
\frac{3}{2}\, s_{t_j,n} P_{t_j,n}
\right).
\end{equation}
This product is expanded by choosing, for each index $j$, either the identity term or the Pauli term. Equivalently, the expansion is indexed by subsets
$S \subseteq \{1,\dots,m\}$, where indices in $S$ select the Pauli term and indices not in $S$ select the identity term. This yields
\begin{equation}
\prod_{j=1}^m
\hat{\rho}_{t_j,n}
=
\sum_{S\subseteq\{1,\dots,m\}}
\left(\frac{1}{2}\right)^{m-|S|}
\left(\frac{3}{2}\right)^{|S|}
\left(
\prod_{j\in S} s_{t_j,n}
\right)
\left(
\prod_{j\in S} P_{t_j,n}
\right),
\label{eq:subset-expansion}
\end{equation}
where the Pauli product is taken in the original shot order.

Taking the single-qubit trace gives
\begin{equation}
\tr\!\left(
\hat{\rho}_{t_1,n}\cdots \hat{\rho}_{t_m,n}
\right)
=
\sum_{S\subseteq\{1,\dots,m\}}
\left(\frac{1}{2}\right)^{m-|S|}
\left(\frac{3}{2}\right)^{|S|}
\left(
\prod_{j\in S} s_{t_j,n}
\right)
\tr\!\left(
\prod_{j\in S} P_{t_j,n}
\right).
\label{eq:single-qubit-trace}
\end{equation}

\subsection*{Pauli trace selection rule}

For a single qubit,
\[
\tr(\mathbb{I}) = 2,
\qquad
\tr(X) = \tr(Y) = \tr(Z) = 0.
\]
Consequently, the trace
$\tr\!\left(\prod_{j\in S} P_{t_j,n}\right)$
is nonzero if and only if the Pauli product equals $\pm \mathbb{I}$. In this case,
\[
\tr\!\left(\prod_{j\in S} P_{t_j,n}\right) = \pm 2,
\]
with the sign determined by the Pauli multiplication rules. All other subsets give zero contribution. Thus, only subsets whose Pauli labels multiply to the identity survive.

\subsection*{Final closed combinatorial form}

Combining the previous steps, for any fixed $m$-tuple $t_1<\dots<t_m$,
\begin{equation}
\Tr\!\left(\hat{\rho}_{t_1}\cdots \hat{\rho}_{t_m}\right)
=
\prod_{n=1}^N
\left[
\sum_{S\subseteq\{1,\dots,m\}}
\left(\frac{1}{2}\right)^{m-|S|}
\left(\frac{3}{2}\right)^{|S|}
\left(
\prod_{j\in S} s_{t_j,n}
\right)
\tr\!\left(
\prod_{j\in S} P_{t_j,n}
\right)
\right].
\label{eq:final-combinatorial-form}
\end{equation}

The estimator of $\Tr(\rho^m)$ is obtained by uniformly averaging \eqref{eq:final-combinatorial-form} over all $\binom{T}{m}$ distinct $m$-tuples of shots.

\subsection*{Remark on cyclicity of the trace}

The cyclicity of the trace guarantees invariance of
\[
\tr\!\left(
P_{t_{j_1},n}\cdots P_{t_{j_r},n}
\right)
\]
under cyclic permutations of the Pauli product. The overall sign $\pm 2$ is fixed by the Pauli multiplication rules but is independent of where the product is cyclically started.

\subsection*{Remark on partial transpose}

Since $\tr(A^T)=\tr(A)$, the same combinatorial expansion applies to
\[
\Tr\!\left(
\hat{\rho}_{t_1}^{T_B}\cdots \hat{\rho}_{t_m}^{T_B}
\right).
\]
The only modification is a sign flip for each $Y$ operator acting on qubits belonging to subsystem $B$.

\section{Partial transpose of a single-shot classical shadow on subsystem \texorpdfstring{$B$}{B}}
\label{app:pt_shadow}

We consider an $N$-qubit single-shot classical shadow snapshot of the form
\begin{equation}
    \hat{\rho} \;=\; \bigotimes_{n=1}^N \hat{\rho}_{n}, 
    \qquad 
    \hat{\rho}_{n} \;=\; 3\, U_n^\dagger \ketbra{b_n}{b_n} U_n \;-\; \mathbb{I},
    \label{single_shot_shadow_app}
\end{equation}
where $b_n\in\{0,1\}$ is the computational-basis measurement outcome after applying a single-qubit unitary $U_n$.
In standard local classical shadows, each $U_n$ is chosen from the single-qubit Clifford set that maps the computational $Z$-basis measurement to a random Pauli basis measurement. Concretely, one may take
\begin{equation}
    U_n \in \{ \mathbb{I},\, H,\, SH\},
\end{equation}
which correspond to measuring in the $Z$, $X$, and $Y$ eigenbases, respectively.

Let $B\subseteq \{1,\dots,N\}$ denote the subsystem to be partially transposed.
We write $T_B$ for partial transpose on $B$ in the computational basis, i.e.
\begin{equation}
    (X_A\otimes Y_B)^{T_B} \;=\; X_A \otimes Y_B^{T},
\end{equation}
extended by linearity.

\subsection{Tensor-factorization of the partial transpose}
\begin{lemma}
\label{lem:pt_tensor}
For any product operator $\hat{\rho}=\bigotimes_{n=1}^N \hat{\rho}_n$ and any subset $B$,
\begin{equation}
    \hat{\rho}^{T_B}
    \;=\;
    \bigotimes_{n\notin B}\hat{\rho}_n
    \;\otimes\;
    \bigotimes_{n\in B}\hat{\rho}_n^{T}.
\end{equation}
\end{lemma}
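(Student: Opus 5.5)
The plan is to reduce the claim to the definition of $T_B$ on elementary tensors. Fix the computational product basis $\ket{i_1\cdots i_N}=\bigotimes_n \ket{i_n}$. The partial transpose on $B$ is the linear map that acts as the identity on the tensor factors in $A=\{1,\dots,N\}\setminus B$ and as the transpose on the factors in $B$. Equivalently, it is the tensor product of linear maps $\bigotimes_{n\notin B}\mathrm{id}_n\otimes\bigotimes_{n\in B}(\cdot)^{T}$, reordered so that factor $n$ sits in its natural slot. This is exactly how the paper introduced it through $(X_A\otimes Y_B)^{T_B}=X_A\otimes Y_B^T$ extended by linearity.

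First, I will spell out the definition at the level of matrix elements. For a general operator $M$, $T_B$ exchanges the $B$-indices of bra and ket:
\[
\bra{i_1\cdots i_N}M^{T_B}\ket{j_1\cdots j_N}=\bra{k_1\cdots k_N}M\ket{l_1\cdots l_N},
\]
where $k_n=i_n$ and $l_n=j_n$ for $n\notin B$, while $k_n=j_n$ and $l_n=i_n$ for $n\in B$.

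Second, I apply this to $M=\hat\rho=\bigotimes_n\hat\rho_n$. The matrix elements of a product operator factorize as $\prod_n\bra{k_n}\hat\rho_n\ket{l_n}$. For $n\notin B$ the factor is $\bra{i_n}\hat\rho_n\ket{j_n}$. For $n\in B$ it is $\bra{j_n}\hat\rho_n\ket{i_n}=\bra{i_n}\hat\rho_n^T\ket{j_n}$. The resulting product is precisely the matrix element of $\bigotimes_{n\notin B}\hat\rho_n\otimes\bigotimes_{n\in B}\hat\rho_n^T$, with factors in their original positions. Since the two operators agree on every matrix element, they are equal.

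Alternatively, one can argue algebraically: expand each $\hat\rho_n$ in matrix units $\ketbra{a}{b}$, use multilinearity of the tensor product and linearity of $T_B$, and note that $T_B$ sends $\bigotimes_n\ketbra{a_n}{b_n}$ to the same tensor with $\ketbra{a_n}{b_n}$ replaced by $\ketbra{b_n}{a_n}$ for $n\in B$.

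I do not expect a real obstacle. The only point needing care is notational. The statement writes the $A$-factors before the $B$-factors, whereas $B$ need not be a contiguous block. I will therefore state explicitly that the tensor product is understood with each factor in its original qubit slot, or equivalently up to the canonical qubit reordering, which commutes with $T_B$. With that convention the result holds for arbitrary operators $\hat\rho_n$, not just snapshots.
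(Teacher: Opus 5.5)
Your proposal is correct and follows essentially the same route as the paper, which applies $(X_A\otimes Y_B)^{T_B}=X_A\otimes Y_B^T$ to a simple tensor and extends by multilinearity over the single-qubit factors. Your matrix-element computation spells out the step $\bigl(\bigotimes_{n\in B}\hat\rho_n\bigr)^T=\bigotimes_{n\in B}\hat\rho_n^T$, which the paper leaves implicit, and your remark that each factor sits in its original qubit slot (since $B$ need not be contiguous) is a clarification the paper omits.
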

\begin{proof}
It suffices to apply $(X_A\otimes Y_B)^{T_B}=X_A\otimes Y_B^T$ to a simple tensor and extend by multilinearity across tensor products of single-qubit factors.
\end{proof}

Thus, the problem reduces to computing $\hat{\rho}_n^T$ for a \emph{single qubit}.
Since $\mathbb{I}^T=\mathbb{I}$, we only need the transpose of the rank-one term:
\begin{equation}
    \hat{\rho}_n^T
    \;=\;
    3\left(U_n^\dagger \ketbra{b_n}{b_n} U_n\right)^T \;-\; \mathbb{I}.
\end{equation}

\subsection{Single-qubit transpose action: explicit derivation}
Define the post-rotation pure state
\begin{equation}
    \ket{\psi_{U,b}} \;:=\; U^\dagger \ket{b},
    \qquad
    \Pi_{U,b} \;:=\; \ketbra{\psi_{U,b}}{\psi_{U,b}}
    \;=\;
    U^\dagger \ketbra{b}{b} U .
\end{equation}
Then
\begin{equation}
    \Pi_{U,b}^T \;=\; \left(\ketbra{\psi_{U,b}}{\psi_{U,b}}\right)^T
    \;=\;
    \ketbra{\psi_{U,b}^*}{\psi_{U,b}^*},
\end{equation}
where ${}^*$ denotes entrywise complex conjugation in the computational basis.

We now evaluate $\ket{\psi_{U,b}^*}$ for the three relevant choices of $U$.

\paragraph{Case 1: $U=\mathbb{I}$ (Pauli-$Z$ basis).}
Here $\ket{\psi_{\mathbb{I},b}}=\ket{b}$ has real amplitudes, hence $\ket{\psi_{\mathbb{I},b}^*}=\ket{b}$ and
\begin{equation}
    \Pi_{\mathbb{I},b}^T \;=\; \Pi_{\mathbb{I},b}.
\end{equation}

\paragraph{Case 2: $U=H$ (Pauli-$X$ basis).}
Using $H=H^T$ and that $H$ is real, we have $\ket{\psi_{H,b}}=H\ket{b}$ with real amplitudes, hence again
\begin{equation}
    \Pi_{H,b}^T \;=\; \Pi_{H,b}.
\end{equation}

\paragraph{Case 3: $U=SH$ (Pauli-$Y$ basis).}
Write the $Y$-eigenstates as
\begin{equation}
    \ket{+_y}=\frac{1}{\sqrt{2}}(\ket{0}+i\ket{1}),
    \qquad
    \ket{-_y}=\frac{1}{\sqrt{2}}(\ket{0}-i\ket{1}).
\end{equation}
One checks that $(SH)^\dagger\ket{0}=\ket{+_y}$ and $(SH)^\dagger\ket{1}=\ket{-_y}$, i.e.
\begin{equation}
    \ket{\psi_{SH,0}}=\ket{+_y},
    \qquad
    \ket{\psi_{SH,1}}=\ket{-_y}.
\end{equation}
Complex conjugation swaps these states:
\begin{equation}
    \ket{+_y}^* \;=\; \frac{1}{\sqrt{2}}(\ket{0}-i\ket{1}) \;=\; \ket{-_y},
    \qquad
    \ket{-_y}^* \;=\; \ket{+_y}.
\end{equation}
Therefore the corresponding projectors are exchanged under transpose:
\begin{equation}
    \Pi_{SH,0}^T \;=\; \Pi_{SH,1},
    \qquad
    \Pi_{SH,1}^T \;=\; \Pi_{SH,0}.
\end{equation}
Equivalently, in the $Y$-basis case,
\begin{equation}
    \Pi_{SH,b}^T \;=\; \Pi_{SH,\,1-b}.
\end{equation}

\subsection{Closed form for the transposed single-qubit shadow factor}
Combining all cases, for $U\in\{\mathbb{I},H,SH\}$ we have
\begin{equation}
    \left(U^\dagger \ketbra{b}{b} U\right)^T
    \;=\;
    U^\dagger \ketbra{\tilde b(U,b)}{\tilde b(U,b)} U,
\end{equation}
where the transformed bit $\tilde b(U,b)$ is given by
\begin{equation}
    \tilde b(U,b)
    \;=\;
    \begin{cases}
      b, & U\in\{\mathbb{I},H\} \quad (Z \text{ or } X\text{ basis}),\\
      1-b, & U=SH \quad (Y\text{ basis}).
    \end{cases}
\end{equation}
Hence the transposed single-qubit shadow factor is
\begin{equation}
    \hat{\rho}_n^{T}
    \;=\;
    3\,U_n^\dagger \ketbra{\tilde b_n}{\tilde b_n} U_n \;-\;\mathbb{I},
    \qquad
    \tilde b_n := \tilde b(U_n,b_n).
    \label{eq:single_qubit_pt_shadow}
\end{equation}

\subsection{Final expression: partial transpose on an arbitrary subsystem \texorpdfstring{$B$}{B}}
Using Lemma~\ref{lem:pt_tensor} and \eqref{eq:single_qubit_pt_shadow}, the partial transpose of the full snapshot
\eqref{single_shot_shadow_app} on subsystem $B$ is
\begin{equation}
    \hat{\rho}^{T_B}
    \;=\;
    \bigotimes_{n\notin B}\Bigl(3\,U_n^\dagger \ketbra{b_n}{b_n} U_n - \mathbb{I}\Bigr)
    \;\otimes\;
    \bigotimes_{n\in B}\Bigl(3\,U_n^\dagger \ketbra{\tilde b_n}{\tilde b_n} U_n - \mathbb{I}\Bigr),
    \label{eq:ptB_full_snapshot}
\end{equation}
with $\tilde b_n=b_n$ for $U_n\in\{\mathbb{I},H\}$ and $\tilde b_n=1-b_n$ for $U_n=SH$.
In words: \emph{partial transpose on $B$ leaves $Z$- and $X$-basis shadow factors unchanged, and flips the recorded bit
$b_n\mapsto 1-b_n$ exactly on those qubits in $B$ that were measured in the $Y$ basis.}

\section{Online Update of the Estimator}
\label{app:streaming-update}

Recall the unbiased estimator
\begin{equation}
    \widehat{P}^{(m)}_T
    := \frac{1}{\binom{T}{m}}
       \sum_{1 \le t_1 < \cdots < t_m \le T}
       \mathrm{Tr}\!\left(
           \hat{\rho}_{t_1}^{T_B}
           \cdots
           \hat{\rho}_{t_m}^{T_B}
       \right),
    \label{eq:app-pt-moment-shadow-estimator}
\end{equation}
defined for integers $T\ge m$.

\subsection{Update Formula and Proof}

\begin{theorem}[Online update]
\label{thm:update-formula}
For every $T\ge m$,
\begin{equation}
  \widehat{P}^{(m)}_{T+1}
  =
  \Bigl(1-\frac{m}{T+1}\Bigr)\widehat{P}^{(m)}_{T}
  \;+\;
  \frac{m}{T+1}\;
  \frac{1}{\binom{T}{m-1}}
  \sum_{1\le t_1<\cdots<t_{m-1}\le T}
  \Tr\!\left(
    \hat\rho_{t_1}^{T_B}\cdots \hat\rho_{t_{m-1}}^{T_B}\,\hat\rho_{T+1}^{T_B}
  \right).
  \label{eq:thm-streaming-update}
\end{equation}
\end{theorem}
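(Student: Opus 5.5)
The plan is to work with the unnormalized sum rather than the averaged estimator. Define
\[
S_T := \sum_{1\le t_1<\cdots<t_m\le T} \Tr\!\left(\hat\rho_{t_1}^{T_B}\cdots\hat\rho_{t_m}^{T_B}\right),
\qquad
R_{T+1} := \sum_{1\le t_1<\cdots<t_{m-1}\le T} \Tr\!\left(\hat\rho_{t_1}^{T_B}\cdots\hat\rho_{t_{m-1}}^{T_B}\,\hat\rho_{T+1}^{T_B}\right).
\]
By \eqref{eq:app-pt-moment-shadow-estimator}, $\widehat P^{(m)}_T = S_T/\binom{T}{m}$.

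The first step is a purely combinatorial splitting. The increasing $m$-tuples $t_1<\cdots<t_m$ drawn from $\{1,\dots,T+1\}$ are partitioned according to whether $t_m\le T$ or $t_m=T+1$. Those with $t_m\le T$ are exactly the index tuples appearing in $S_T$. Those with $t_m=T+1$ are in bijection with increasing $(m-1)$-tuples from $\{1,\dots,T\}$. Because the index tuples are ordered increasingly, the new snapshot $\hat\rho_{T+1}^{T_B}$ always appears last in the product. So no cyclicity of the trace is needed, although it would also be harmless. This gives the exact identity $S_{T+1} = S_T + R_{T+1}$.

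The second step divides by $\binom{T+1}{m}$ and uses two binomial identities. The first is
\[
\binom{T}{m}\Big/\binom{T+1}{m} = \frac{T+1-m}{T+1} = 1-\frac{m}{T+1}.
\]
The second follows from $\binom{T+1}{m} = \frac{T+1}{m}\binom{T}{m-1}$ and gives
\[
\frac{1}{\binom{T+1}{m}} = \frac{m}{T+1}\cdot\frac{1}{\binom{T}{m-1}}.
\]
Substituting $S_T = \binom{T}{m}\widehat P^{(m)}_T$ then yields \eqref{eq:thm-streaming-update} directly.

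There is no real obstacle. The only points needing care are bookkeeping. One is confirming that the ordering convention puts $T+1$ in the last slot of the product. The other is that the hypothesis $T\ge m$ guarantees that $\binom{T}{m}$ and $\binom{T}{m-1}$ are nonzero, so $\widehat P^{(m)}_T$ is well-defined and the divisions are legitimate. As a corollary, since the recursion reproduces the $U$-statistic exactly, unbiasedness transfers immediately from the offline estimator.
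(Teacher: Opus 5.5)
Your proposal is correct and follows essentially the same route as the paper: you split the increasing $m$-tuples from $\{1,\dots,T+1\}$ by whether they contain $T+1$, obtain $\binom{T+1}{m}\widehat P^{(m)}_{T+1} = \binom{T}{m}\widehat P^{(m)}_T + R_{T+1}$, and rescale using the same two binomial ratios. The only cosmetic difference is that you justify the second ratio via $\binom{T+1}{m} = \frac{T+1}{m}\binom{T}{m-1}$ directly, whereas the paper cites Pascal's identity at that step.
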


\begin{proof}
Multiply \eqref{eq:app-pt-moment-shadow-estimator} at time $(T+1)$ by $\binom{T+1}{m}$:
\[
\binom{T+1}{m}\,\widehat{P}^{(m)}_{T+1}
=
\sum_{1\le t_1<\cdots<t_m\le T+1}
\Tr\!\left(
\hat\rho_{t_1}^{T_B}\cdots \hat\rho_{t_m}^{T_B}
\right).
\]
Partition the sum into $m$-tuples that exclude $(T+1)$ and those that include $(T+1)$:
\begin{align*}
\binom{T+1}{m}\,\widehat{P}^{(m)}_{T+1}
&=
\sum_{1\le t_1<\cdots<t_m\le T}
\Tr\!\left(
\hat\rho_{t_1}^{T_B}\cdots \hat\rho_{t_m}^{T_B}
\right)\\
&\quad+
\sum_{1\le t_1<\cdots<t_{m-1}\le T}
\Tr\!\left(
\hat\rho_{t_1}^{T_B}\cdots \hat\rho_{t_{m-1}}^{T_B}\hat\rho_{T+1}^{T_B}
\right).
\end{align*}
The first sum equals $\binom{T}{m}\widehat{P}^{(m)}_{T}$ by \eqref{eq:app-pt-moment-shadow-estimator}. Therefore,
\[
\binom{T+1}{m}\,\widehat{P}^{(m)}_{T+1}
=
\binom{T}{m}\widehat{P}^{(m)}_{T}
+
\sum_{1\le t_1<\cdots<t_{m-1}\le T}
\Tr\!\left(
\hat\rho_{t_1}^{T_B}\cdots \hat\rho_{t_{m-1}}^{T_B}\hat\rho_{T+1}^{T_B}
\right).
\]
Divide by $\binom{T+1}{m}$ and use Pascal's identity
$\binom{T+1}{m}=\binom{T}{m}+\binom{T}{m-1}$ to rewrite the coefficients:
\[
\frac{\binom{T}{m}}{\binom{T+1}{m}}=\frac{T+1-m}{T+1}=1-\frac{m}{T+1},
\qquad
\frac{1}{\binom{T+1}{m}}
=
\frac{\binom{T}{m-1}}{\binom{T+1}{m}}\cdot\frac{1}{\binom{T}{m-1}}
=
\frac{m}{T+1}\cdot\frac{1}{\binom{T}{m-1}}.
\]
Substituting these identities yields \eqref{eq:thm-streaming-update}.
\end{proof}

\section{Descartes' rule of signs and the ESP--moment hierarchy}
\label{app:ng}
We consider the partial transpose $A:=\rho^{T_B}\in\mathbb{C}^{d\times d}$, which is Hermitian and
therefore has real eigenvalues $\lambda_1,\dots,\lambda_d\in\mathbb{R}$. NPT entanglement is
equivalent to $A$ having at least one negative eigenvalue. Our entanglement test proceeds by
(i) estimating the power sums $p_m:=\Tr(A^m)$ from randomized measurements, (ii) converting
$(p_1,\dots,p_k)$ into elementary symmetric polynomials $e_k(A)$ via Newton--Girard, and
(iii) checking the sign constraints implied by $A\succeq 0$. Descartes' rule provides an
additional interpretation: the pattern of signs of $(1,e_1,\dots,e_d)$ upper-bounds (and fixes
the parity of) the number of negative eigenvalues of $A$.

\paragraph{Elementary symmetric polynomials.}
Define
\[
e_k(A) \;:=\; e_k(\lambda_1,\dots,\lambda_d)
= \sum_{1\le i_1<\cdots<i_k\le d}\lambda_{i_1}\cdots\lambda_{i_k},
\qquad k=1,\dots,d,
\]
with $e_0(A):=1$.

\paragraph{A polynomial whose positive roots are the negative eigenvalues.}
The characteristic polynomial is
\begin{equation}
\chi_A(x) := \det(x\mathbb{I}-A)
= x^d - e_1(A)\,x^{d-1} + e_2(A)\,x^{d-2} - \cdots + (-1)^d e_d(A).
\label{eq:charpoly-esp}
\end{equation}
Define
\begin{equation}
q_A(x) := \det(x\mathbb{I}+A)=(-1)^d\,\chi_A(-x)
= x^d + e_1(A)\,x^{d-1} + e_2(A)\,x^{d-2} + \cdots + e_d(A).
\label{eq:qA-def}
\end{equation}
Since $q_A(x)=\prod_{i=1}^d (x+\lambda_i)$, its positive real roots are exactly the numbers
$x=-\lambda_i>0$ corresponding to negative eigenvalues $\lambda_i<0$ (with multiplicities).

\paragraph{Sign variation.}
For a real sequence $(a_0,\dots,a_d)$, let $V(a_0,\dots,a_d)$ denote the number of sign changes
after deleting zero entries.

\begin{lemma}[Descartes' rule $\Rightarrow$ bound on negative eigenvalues]
\label{lem:descartes-npt}
Let $A=\rho^{T_B}$ and $q_A$ be as in \eqref{eq:qA-def}. Then the number of negative eigenvalues
of $A$ (counted with multiplicity) equals the number of positive real roots of $q_A$, and hence
\begin{align}
\#\{i:\lambda_i<0\} \;&\le\; V\bigl(1,\,e_1(A),\,e_2(A),\,\dots,\,e_d(A)\bigr), \label{eq:descartes-bound}\\
\#\{i:\lambda_i<0\} \;&\equiv\; V\bigl(1,\,e_1(A),\,e_2(A),\,\dots,\,e_d(A)\bigr)\pmod 2. \label{eq:descartes-parity}
\end{align}
In particular, if $e_k(A)\ge 0$ for all $k$, then $q_A$ has no positive roots and $A\succeq 0$.
Conversely, if $e_k(A)<0$ for some $k$, then necessarily $A$ has a negative eigenvalue, i.e.\ $\rho$ is NPT entangled.
\end{lemma}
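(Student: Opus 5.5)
The first step is to establish the factorization $q_A(x)=\det(x\mathbb{I}+A)=\prod_{i=1}^d(x+\lambda_i)$. Since $A=\rho^{T_B}$ is Hermitian, it is unitarily diagonalizable, $A=V\,\mathrm{diag}(\lambda_1,\dots,\lambda_d)V^\dagger$, and $\det(x\mathbb{I}+A)$ is invariant under this conjugation. Expanding the product $\prod_i(x+\lambda_i)$ and collecting the coefficient of $x^{d-k}$ gives exactly $e_k(\lambda_1,\dots,\lambda_d)$, which verifies \eqref{eq:qA-def}. Because every $\lambda_i$ is real, $q_A$ has real coefficients and only real roots. A root $x_0>0$ occurs exactly when $x_0=-\lambda_i$ for some $\lambda_i<0$, with the same multiplicity. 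This proves the first claim: the number of negative eigenvalues equals the number of positive roots of $q_A$.

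For \eqref{eq:descartes-bound} and \eqref{eq:descartes-parity} I will invoke Descartes' rule of signs for the real polynomial $q_A$, whose coefficient sequence in descending powers is $(1,e_1,\dots,e_d)$. The rule says the number of positive roots, counted with multiplicity, is at most $V(1,e_1,\dots,e_d)$ and differs from it by an even integer. One subtlety is the parity statement when $e_d=0$, i.e.\ when $0$ is a root. In that case I factor out the maximal power $x^r$ with $e_{d-r}\neq 0$. Doing so changes neither the positive roots nor the sign-variation count, since zero entries are deleted anyway. Descartes' rule is then applied to $q_A(x)/x^r$, whose constant term is nonzero.

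The two corollaries follow directly. If $e_k\ge0$ for all $k$, the sequence with zeros deleted is all positive, so $V=0$. Hence there are no negative eigenvalues and $A\succeq0$. I can also see this without Descartes: for $x>0$, every term of $q_A(x)$ is nonnegative and the leading term $x^d$ is positive, so $q_A(x)>0$. For the converse, suppose $e_k<0$ for some $k$. If $A$ had no negative eigenvalue, all $\lambda_i\ge0$, and every $e_k$ would be a sum of products of nonnegative numbers, hence $e_k\ge 0$, a contradiction. So $A$ has a negative eigenvalue, and by the Peres--Horodecki criterion $\rho$ is NPT entangled.

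I expect the main obstacle to be handling the parity statement cleanly when some $e_k$ vanish, in particular the trailing ones. The factor-out-$x^r$ reduction above is my plan for this.
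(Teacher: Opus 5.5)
Your proposal is correct and follows the same route as the paper. The factorization $q_A(x)=\prod_i(x+\lambda_i)$ puts the positive roots of $q_A$ in multiplicity-preserving correspondence with the negative eigenvalues of $A$, and Descartes' rule of signs then gives both the bound and the parity statement. Your extra steps only make explicit what the paper's one-line proof leaves implicit: dividing out the largest power of $x$ that divides $q_A$ (i.e.\ taking the smallest $r$ with $e_{d-r}\neq 0$), and proving the converse directly from the fact that nonnegative eigenvalues force every $e_k\ge 0$.
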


\begin{proof}
The factorization $q_A(x)=\prod_{i=1}^d(x+\lambda_i)$ shows that positive roots of $q_A$ are in one-to-one
correspondence with negative eigenvalues of $A$, preserving multiplicities. Descartes' rule of signs applied
to $q_A$ yields \eqref{eq:descartes-bound} and \eqref{eq:descartes-parity}. The final implications follow.
\end{proof}

\paragraph{Newton--Girard recurrence.}
The power sums $p_k:=\Tr(A^k)=\sum_i\lambda_i^k$ and elementary symmetric polynomials $e_k$ are
related by the Newton--Girard identities: with $e_0:=1$,
\begin{equation}
  \label{eq:newton-girard-werner}
  k\,e_k \;=\; \sum_{j=1}^{k}(-1)^{j-1}\,p_j\,e_{k-j},
  \qquad k\ge 1.
\end{equation}
This allows one to express each $e_k$ as a polynomial in $p_1,\dots,p_k$, enabling the
PPT conditions $e_k\ge 0$ to be checked directly from estimated moments.

\paragraph{Low-order moment constraints (isolating the new moment).}
Let
\[
p_m:=\Tr\!\bigl[(\rho_W^{T_B}(t))^m\bigr],\qquad m\ge 1,
\]
so $p_1=\Tr(\rho_W^{T_B}(t))=1$.
Eliminating $e_1,e_2,\dots$ recursively via \eqref{eq:newton-girard-werner}, the first nontrivial conditions
$e_k\ge 0$ can be written as affine constraints on the new moment $p_k$ in terms of lower-order moments:
\begin{align}
p_2 \;&\le\; 1, \label{eq:ineq-p2}\\
p_3 \;&\ge\; \frac{3p_2-1}{2}, \label{eq:ineq-p3}\\
p_4 \;&\le\; \frac{1-6p_2+3p_2^2+8p_3}{6}, \label{eq:ineq-p4}\\
p_5 \;&\ge\; \frac{-1+10p_2-15p_2^2-20p_3+20p_2p_3+30p_4}{24}. \label{eq:ineq-p5}
\end{align}
In general, the parity alternates: even $k$ yields an \emph{upper} bound on $p_k$, while odd $k$ yields a
\emph{lower} bound on $p_k$.

Since $p_1=1$ is simplified.

\section{Werner states and the one-parameter family}\label{app:werner-states}

A bipartite state $\rho\in\mathcal B(\mathbb C^d\otimes\mathbb C^d)$ is called a \emph{Werner state} if it is invariant
under conjugation by $U\otimes U$ for all unitaries $U\in U(d)$, i.e.
\begin{equation}
  (U\otimes U)\,\rho\,(U^\dagger\otimes U^\dagger)=\rho
  \qquad \forall\,U\in U(d).
\end{equation}
By this symmetry, any Werner state lies in the span of the identity $\mathbb I$ and the flip operator
\begin{equation}
  F \;=\; \sum_{i,j=0}^{d-1}\ket{i,j}\!\bra{j,i}.
\end{equation}
We use the following one-parameter normalization throughout:
\begin{equation}
  \label{eq:werner-def}
  \rho_W(t)=\frac{1}{d^2-dt}\bigl(\mathbb I-tF\bigr),\qquad t\in[-1,1].
\end{equation}
Then $\Tr(\rho_W(t))=1$. For $d\ge2$ and $t\in[-1,1]$ we have $d^2-dt>0$, and since
$\mathrm{spec}(F)\subset\{\pm1\}$ the operator $\mathbb I-tF$ has eigenvalues $1\pm t\ge0$.
Hence $\rho_W(t)\succeq 0$ on $t\in[-1,1]$.

\paragraph{Note}
In the qubit setting, we assume $N$ total qubits and a balanced bipartition $N/2\,|\,N/2$ (hence $N$ even).
Then each subsystem has local dimension $d=2^{N/2}$, so \eqref{eq:werner-def} applies on $\mathbb C^d\otimes\mathbb C^d$.

\subsection{Partial transpose and PT spectrum}

We now compute $\rho_W^{T_B}(t)$ and its spectrum. By linearity,
\begin{equation}
  \rho_W^{T_B}(t)=\frac{1}{d^2-dt}\bigl(\mathbb I-tF^{T_B}\bigr).
\end{equation}
Starting from $F=\sum_{i,j}\ket{i}\!\bra{j}\otimes\ket{j}\!\bra{i}$, partial transpose on $B$ yields
\begin{equation}
  F^{T_B}
  = \sum_{i,j}\ket{i}\!\bra{j}\otimes(\ket{j}\!\bra{i})^T
  = \sum_{i,j}\ket{i}\!\bra{j}\otimes\ket{i}\!\bra{j}
  = \sum_{i,j}\ket{i,i}\!\bra{j,j}.
\end{equation}
Introduce $\ket{\Phi_d}=\frac{1}{\sqrt d}\sum_{k=0}^{d-1}\ket{k,k}$ and $P=\ket{\Phi_d}\!\bra{\Phi_d}$.
Then $P=\frac{1}{d}\sum_{i,j}\ket{i,i}\!\bra{j,j}$, hence
\begin{equation}
  \label{eq:FPT}
  F^{T_B}=dP,
\end{equation}
and therefore
\begin{equation}
  \label{eq:werner-pt}
  \rho_W^{T_B}(t)=\frac{1}{d^2-dt}\bigl(\mathbb I-tdP\bigr).
\end{equation}

Since $P$ is rank one, $\mathbb I-tdP$ has eigenvalue $(1-td)$ along $\ket{\Phi_d}$ and eigenvalue $1$
on the orthogonal complement (dimension $d^2-1$). Dividing by $(d^2-dt)$ yields
\begin{equation}
  \label{eq:pt-eigs}
  \lambda_-=\frac{1-dt}{d^2-dt}\ \ (\text{mult. }1),\qquad
  \lambda_+=\frac{1}{d^2-dt}\ \ (\text{mult. }d^2-1).
\end{equation}

\paragraph{PPT threshold.}
For $d\ge 2$ and $t\in[-1,1]$, one has $d^2-dt\ge d(d-1)>0$, hence $\lambda_+>0$.
Therefore $\rho_W(t)$ is PPT iff $\lambda_-\ge 0$, i.e.
\begin{equation}
  \label{eq:ppt}
  \boxed{\rho_W(t)\ \text{is PPT} \iff t\le \frac1d.}
\end{equation}

\begin{figure}
    \centering
    \includegraphics[width=0.5\linewidth]{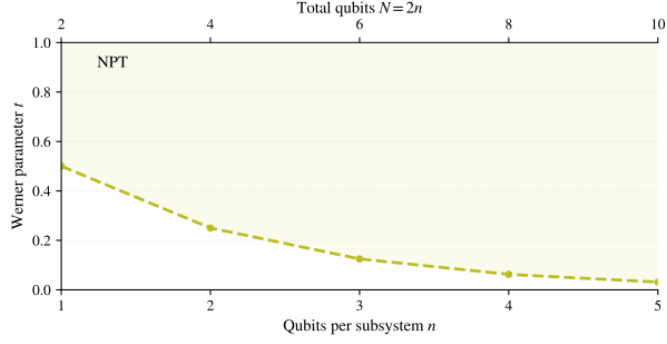}
    \caption{\textbf{PPT threshold for Werner states as a function of system size.}
    The threshold $t^* = 1/d = 2^{-N/2}$ decreases exponentially with the number
    of qubits $N$, making it harder to engineer NPT Werner states on larger systems
    while staying within $t\in[-1,1]$.}
\end{figure}
\subsection{Elementary symmetric polynomials and violation thresholds}

Let $\lambda_1,\dots,\lambda_{d^2}$ be the eigenvalues of $\rho_W^{T_B}(t)$ and define, for $k=0,1,\dots,d^2$,
\begin{equation}
  e_k
  :=\sum_{1\le i_1<\cdots<i_k\le d^2}\lambda_{i_1}\cdots\lambda_{i_k},
  \qquad e_0:=1.
\end{equation}
(For $k>d^2$, $e_k:=0$.) Our PT--moment inequalities are $e_k\ge 0$ (see Appendix~\ref{app:ng}).

With one eigenvalue $\lambda_-$ and $d^2-1$ copies of $\lambda_+$, we obtain for $k\ge 1$:
\begin{equation}
  \label{eq:ek-closed}
  e_k
  =
  \binom{d^2-1}{k}\lambda_+^k
  +\binom{d^2-1}{k-1}\lambda_-\,\lambda_+^{k-1}.
\end{equation}
Since $d^2-dt>0$ on our parameter range, $\lambda_+=1/(d^2-dt)>0$, so dividing by
$\lambda_+^{k-1}$ preserves inequalities.

Using $\lambda_-/\lambda_+=1-dt$ and $\binom{n}{k}/\binom{n}{k-1}=(n-k+1)/k$ with $n=d^2-1$ gives
\[
e_k<0
\iff
\frac{d^2}{k}-dt<0
\iff
t>\frac{d}{k}.
\]
Therefore,
\begin{equation}
  \label{eq:ek-threshold}
  e_k(\rho_W^{T_B}(t))<0
  \quad\Longleftrightarrow\quad
  t>\frac{d}{k}.
\end{equation}
In particular, on $t\in[-1,1]$ the condition $t>d/k$ can only be met when $k>d$
(since $d/k\ge1$ for $k\le d$). Thus the first possible violated order is $k=d+1$.
\begin{figure}
    \centering
    \includegraphics[width=1\linewidth]{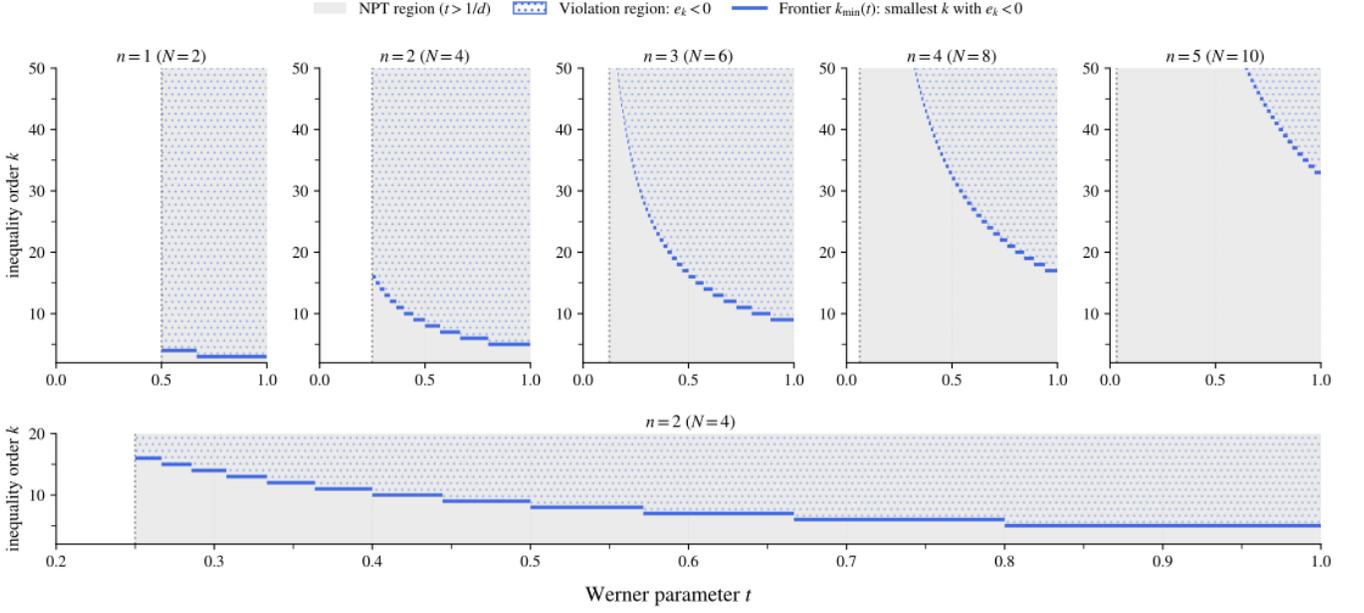}
    \caption{\textbf{Engineering window for Werner states.}
    Each horizontal band corresponds to the interval $t\in(d/k,\, d/(k-1)]$
    in which the $k$-th moment is the first to violate the PPT inequality,
    shown here for $N=2,4,6,8$ qubits ($d=2^{N/2}$).
    Inset: zoom into the high-$t$ region showing how the windows narrow
    as $k$ increases and $d/k\to 0$.}
\end{figure}
\paragraph{Engineering window.}
Fix $k>d$. The condition that $k$ is the \emph{smallest} violated order is $e_k<0$ and $e_j\ge 0$ for all $j<k$.
Since $e_j<0 \iff t>d/j$, it suffices to enforce $e_{k-1}\ge 0$ in addition to $e_k<0$, yielding
\begin{equation}
  \label{eq:window}
  \boxed{
  t\in\left(\frac{d}{k},\ \frac{d}{k-1}\right].
  }
\end{equation}

\begin{table}[t]
   \centering
   \begin{tabular}{lrrr}
        \hline
        State & Qubits & Parameter $t$ & First violated moment $k$ \\ \hline
        Werner & 2 & 0.8333 & 3 \\
        Werner & 2 & 0.5833 & 4 \\
        Werner & 4 & 0.9000 & 5 \\
        Werner & 4 & 0.7333 & 6 \\
        Werner & 6 & 0.8444 & 10 \\
        Werner & 6 & 0.9444 & 11 \\
        Werner & 8 & 0.9150 & 18 \\
        Werner & 8 & 0.9706 & 17 \\ \hline
   \end{tabular}
   \caption{Werner state benchmark instances. The mixing parameter $t$ is chosen so that entanglement is first detected at moment order $k$.}
   \label{tab:states-benchmarking}
\end{table}
\subsection{Further Results}\label{app:further-results}
We extend the results from the main paper in this section. \Cref{fig:2-qubit-5833-error} shows the relative errors of the estimators for the 2-qubit Werner state instance with $t=0.5833$, and \Cref{fig:4-qubit-9-error} visualizes the relative errors for the 4-qubit Werner state with $t=0.9$.

\begin{figure}[ht]
    \centering
    \includegraphics[width=1\linewidth]{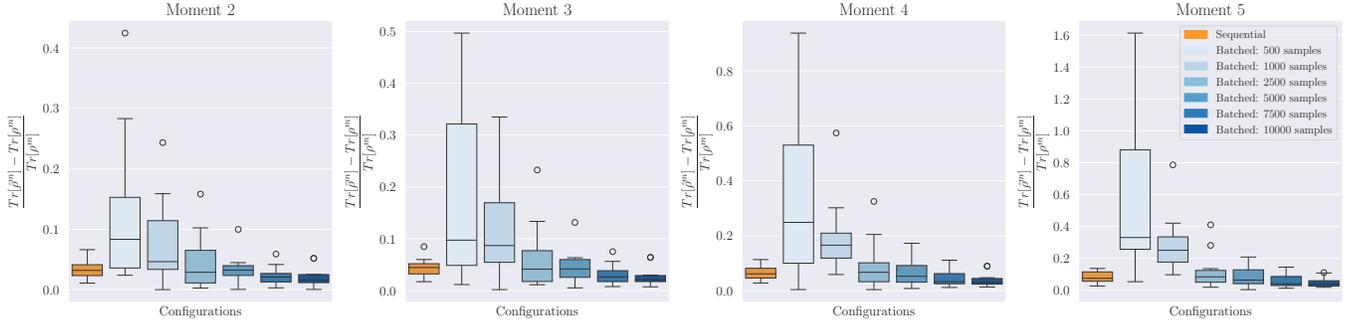}
    \caption{\textbf{2-qubit Werner State, $t=0.5833$:} The maximum number of samples used for the online estimate is 5,932.}
    \label{fig:2-qubit-5833-error}
\end{figure}

\begin{figure}[ht]
    \centering
    \includegraphics[width=0.75\linewidth]{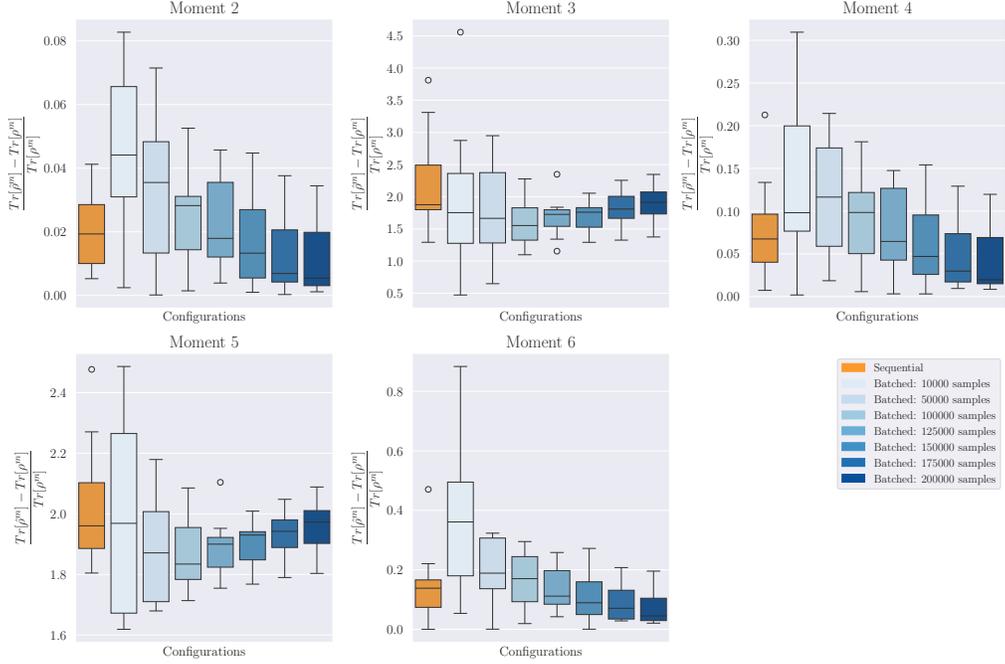}
    \caption{\textbf{4-qubit Werner State, $t=0.9$:} The maximum number of samples used for the online estimate is 160,828.}
    \label{fig:4-qubit-9-error}
\end{figure}

\end{document}